\newtheorem{theorem}{Theorem}[section]
\newtheorem{proposition}[theorem]{Proposition}
\begin{document}

\title{Countering Misinformation on Social Media Through Educational Interventions: Evidence from a Randomized Experiment in Pakistan}
\author{Ayesha Ali and Ihsan Ayyub Qazi\thanks{%
Ali: Department of Economics, Lahore University of Management Sciences (LUMS), DHA, Lahore Cantt, 54792, Lahore, Pakistan, ayeshaali@lums.edu.pk. Qazi: Lahore University of Management Sciences (LUMS), DHA, Lahore Cantt, 54792, Lahore, Pakistan, ihsan.qazi@lums.edu.pk. We thank Ghazala Mansuri, David G. Rand, Asim Khwaja, Alex Leavitt, Andrew Guess, Zafar Ayyub Qazi, Dina Tasneem, Kate Vyborny, Syed Zahid Ali, Syed Ali  Hasanain,  Naved  Hamid,  David  Yang  and  participants  at  MIT  Sloan  School  of  Management, Facebook Research, American University of Sharjah, Lahore School of Economics,and Lahore University of Management Sciences research seminars, for valuable comments. We gratefully acknowledge support by the Facebook Integrity Foundational Research Award 2019 from Facebook Research in the form of an unrestricted gift and the Lahore University of Management Sciences (LUMS) Faculty Initiative Fund grant. We thank Sara Obaid, Noor ul Islam and Noor Alam for project coordination and research assistance, and Ahsan Tariq and Mohammad Mallick at the Institute of Development Alter-natives (IDEAS) Survey Wing for leading field implementation and data collection.  We also thank Zartash Uzmi for appearing in our intervention video as a presenter. The study was approved by the IRB at Lahore University of Management Sciences (Protocol Number: LUMS IRB/03252019). This study is registered in the AEA RCT Registry with the unique identifying number: ``AEARCTR-0004003".}}

\maketitle

\begin{abstract}
Fake news is a growing problem in developing countries with potentially far-reaching consequences. We conduct a randomized experiment in urban Pakistan to evaluate the effectiveness of two educational interventions to counter misinformation among low-digital literacy populations. We do not find a significant effect of video-based general educational messages about misinformation. However, when such messages are augmented with personalized feedback based on individuals’ past engagement with fake news, we find an improvement of 0.14 standard deviations in identifying fake news. We also find negative but insignificant effects on identifying true news, driven by female respondents. Our results suggest that educational interventions can enable information discernment but their effectiveness critically depends on how well their features and delivery are customized for the population of interest.

\noindent
\emph{Keywords}: Social media, fake news, misinformation, beliefs, education, digital literacy

\noindent
\emph{JEL}: D83, L86, L82, O10
\end{abstract}

\section{Introduction}
Misinformation is a growing concern in developing countries where low-cost smartphones and mobile Internet access has led to increasing use of social media platforms as sources of information and a place for social and political activity.\footnote{According to the `Digital in 2020' report \cite{digital_2020}, social media use in developing countries is increasing with annual growth rates ranging from 12\% to 38\%.} This trend has brought many new users online, including those with limited digital literacy, which is the ability to access, understand and use information online \cite{survey_measure}.
According to the International Telecommunications Union \cite{itu_skills}, less than half of the population in 40 developing and emerging countries, possesses basic computer skills such as copying a file or sending an email with an attachment.
These relatively inexperienced technology users are joining social media in a landscape in which misinformation is a common phenomenon.
This can have far-reaching consequences for individuals and society.\footnote{These effects can range from  election interference (\citeNP{misinfo_election}; \shortciteNP{grinberg}), polarization (\shortciteNP{affective}; \citeNP{cross_polar}), lack of adherence to preventive measures during the COVID-19 pandemic \shortcite{polarization_health,misinfo_covid}, and to even conflict (\shortciteNP{conflict_outcomes}; \citeNP{conflict}; \citeNP{social_hate}; \citeNP{capitalhill}).}

Can simple educational interventions improve the ability to identify misinformation among low digital literacy populations?
Educating users about misinformation and the digital space can offer a promising pathway to reduce the prevalence and impact of fake news.\footnote{While platform providers, such as Facebook, can filter and flag potentially fake content, it is challenging to remove it completely due to the nuanced nature of misinformation that varies based on context, demographics, and culture (\citeNP{rand_crowd}; \citeNP{factcheck2}; \shortciteNP{factcheck3}; \shortciteNP{factcheck}). In messaging services, such as WhatsApp, that provide end-to-end encryption of messages, directly filtering or flagging content may not even be possible.}
Although the benefits of educating people on countering misinformation seem plausible, there is little direct evidence about the effectiveness of educational interventions to counter misinformation in developing countries.

In this paper, we evaluate the effectiveness of two educational interventions for countering misinformation using a clustered randomized control design with 750 participants drawn from urban Pakistan. A cluster in our study is a small geographic area of 100 m x 100 m, covering low and middle income areas in the city of Lahore. To avoid spillovers, we assign 50 clusters each to control and the two treatment arms. Within each cluster, we draw five individuals from five distinct households. A unique feature of our study is that we recruit our sample from the field, instead of relying on online survey platforms that are less representative of low digital literacy users.\footnote{For example, \citeN{guess_dl} and \citeN{AMT} found that MTurk samples are not representative of low digital literacy users and they differ not just demographically but also in their online behaviors.} 
In the baseline, we measure the extent to which individuals are able to correctly identify fake news using a list of actual true and fake news stories circulated on social media.
This is followed by the delivery of educational interventions in the treatment arms.
In the first treatment (treatment 1), individuals are shown an informational video in the national language that educates them about common features of misinformation. In the second treatment (treatment 2), individuals are first shown the video and then given personalized feedback about their own responses to fake news stories shown at baseline. The feedback highlights the features of each fake news item that can enable the individuals to identify the news as fake.

We measure the effectiveness of our interventions through two endline surveys, by asking individuals about their beliefs regarding two additional sets of news stories circulated on social media. We also measure different ways users engage with the news including why they label a news item as true or fake, the degree of emotional arousal, intent to verify and share on social media. The endlines surveys were conducted after one week and then 4 to 6 weeks after the treatment was delivered.
We were unable to directly observe online behaviors due to the high usage of WhatsApp an end-to-end encrypted platform (96\%) and low usage of open platforms like Twitter (11\%) in our sample. Tracking individuals on Facebook (46\% usage in our sample) is also difficult due to privacy and safety concerns especially among female users \cite{safe_digital}. To assess the longevity and external validity of our treatments, we conducted a follow-up phone survey fifteen months after the interventions on assessing the accuracy of beliefs about news related to the COVID-19 pandemic as well as information-seeking and preventative behaviors during the pandemic.

Our results show that treatment 1 does not have any significant impact. However, we find that those who received treatment 2 were 0.14 standard deviations more likely to correctly identify fake news relative to the control group. There is no significant effect on correctly identifying true news. As a result, the overall effect of treatment 2 on all news is 0.076 standard deviations, which points to improved discernment of news. Those who received treatment 2 are 0.13 standard deviations less likely to report non-neutral (positive or negative) emotional reactions, relative to the control group.
Furthermore, individuals who received treatment 2 are less likely to mention prior beliefs and more likely to mention features of the news, such as unreliable source, as the reason why they consider a news story to be fake or true. 

We show through various robustness checks that experimenter demand effects are unlikely to explain the effect of treatment 2. For example, we do not find any effect on placebo news items (constructed fake news but with some features of true news) which suggests that respondents are not simply terming news as fake to please the experimenter. We are also able to reject the explanation that the treatment effect comes from randomly guessing more news items as fake. Furthermore, we do not find evidence of a larger impact in a sub-group who might identify news as fake in order to conform to the socially desirable trait of not falling for fake news. 

More detailed results show that individuals respond heterogeneously to treatment 2. In particular, the effect of treatment 2 is driven by those with higher digital literacy and by the male respondents in our sample.
Moreover, our results also show that there is limited impact on female respondents. We find that they are likely to label both fake and true news as fake which suggests no improvement in information discernment and increased skepticism even towards true news on social media. 

Our phone survey on COVID-19 misinformation, which took place after the first wave of the pandemic in Pakistan, shows a high level of general awareness and accuracy of beliefs about COVID-19 related fake news in the entire sample. We find that general awareness masks any potential treatment effect. However, we find a 0.09 standard deviation improvement in the accuracy of beliefs in the treated individuals who have never seen or heard the news before. We also find evidence of more conservative information-seeking and sharing behaviors on social media and greater adoption of preventative health behaviors such as social distancing and mask ownership among the treated group relative to the control group. We interpret these findings to be indicative of long-lasting benefits of educational interventions.

To explain our findings, we present a simple conceptual framework of how individuals form beliefs about the accuracy of news encountered on social media. We model individuals as motivated reasoners, i.e., they like news conforming to their prior beliefs.\footnote{There is a large body of literature that models individuals as seeking to confirm prior beliefs to minimize psychological costs in the tradition of \citeN{socialpsych}, for example \cite{market_news}, \cite{model2}. Empirical evidence on misinformation also shows that people are more likely to believe and share fake news that conforms to their prior beliefs \cite{misinfo_election,share}.} However, they are also able to apply analytical reasoning to form beliefs about the news, rather than relying on ``gut feelings" or intuitions.\footnote{According to dual-process theories, human cognition can be characterized by a distinction between intuitive, autonomous (System 1) processes and deliberative, analytic (System 2) processes \cite{kahneman,dual_evan,3_stage}. In the misinformation literature, \citeN{rand_lazy} show that analytical reasoning improves accuracy of beliefs about the news.} In our model, educational interventions decrease the cost of applying analytical reasoning by providing simple decision rules to evaluate the credibility of sources as well as indicators of problematic content without expending significant effort. Thus, as a result of educational interventions, individuals will be more likely to form correct beliefs about fake news, even if it conforms to their prior beliefs. 

Our work is related to emerging literature in economics, psychology, political science, and related fields that seeks to understand the effectiveness of ways to reduce the impact and distribution of misinformation. Most of the literature has examined the effectiveness of behavioral interventions such as nudges, which are coupled with the presented news. For example, attaching warnings to news stories disputed by third-party fact-checkers \cite{rand_implied} and using crowd-sourcing to generate trust ratings to differentiate reliability of sources have been found effective in increasing the accuracy of beliefs about misinformation \cite{rand_crowd}. Moreover, a general warning about misleading information on social media or attaching ``Disputed'' or ``Rated false'' tags to specific headlines was also found to be effective in reducing perceived accuracy of false headlines \shortcite{tags}. However, an unintended consequence of these approaches is an increase in skepticism towards true news.

Despite this emerging literature, there is little evidence on the effectiveness of educational interventions that focus on teaching social media users how to recognize fake news, without nudges or attaching labels to the news itself, and especially in the context of low digital literacy users. We are aware of two contemporaneous media and information literacy evaluation programs carried out in developing countries. \citeN{badrinathan} evaluates the effectiveness of in-person media literacy training on two tools for identifying misinformation: navigating a fact-checking website and reverse image searching using a low digital literacy sample from India. The paper finds no significant effect on the respondents' ability to identify misinformation. Similarly, \shortciteN{guess_experiment} evaluated the effectiveness of a social media literacy intervention in which individuals in United States and India were provided with tips to spot fake news. Their India sample consisted of high digital literacy users recruited online and a rural sample with very low familiarity with social media. The authors find a positive effect of giving tips online to highly educated social media users which declines over time. There is no effect in their rural sample for whom the tips are given verbally by enumerators.

Our paper adds to the literature on countering misinformation in developing countries among individuals with low digital literacy. 
Moreover, by evaluating two treatments we also provide evidence on which features of such programs are likely to be important in improving the ability to identify fake news. Our results show that despite their popularity, general educational messages do not produce any significant impact. Rather educational programs involving personalized feedback and guidance are more effective and longer-lasting for an average social media user in populations where digital literacy is lower relative to developed countries. Our findings also highlight the importance of customizing interventions from the gender perspective, as we show that females in our sample do not experience improved discernment between fake and true news. While we are not able to directly observe the online or offline behaviors of these respondents, the results from our phone survey on COVID-19 is suggestive of positive long-term effects on real-world behaviors in the context of the pandemic.

The remainder of the paper is structured as follows: section \ref{sec:context} outlines the context and describes the experiment, data, and empirical methodology; section \ref{sec:results} presents the results; section \ref{sec:model} presents a model to explain belief formation and impact of educational interventions; section \ref{sec:discussion} discusses our results and their implications; and section \ref{sec:concl} concludes.

\section{Context and Experiment Design}
\label{sec:context}
We choose to focus our study on social media users in Pakistan, a large developing country where approximately 18\% of the population or 37 million people use social media. According to the `Digital 2019: Pakistan' report \cite{portalpakistan}, social media penetration has grown from just 4\% in 2013 to 18\% in 2019. Many of these users are newly connected to the Internet through their smartphones. Facebook.com and whatsapp.com are among the websites with the largest average monthly traffic in the country \cite{topwebsites}. Furthermore,  opinion polls conducted with a nationally representative sample of Internet users show that 40\% visit Facebook and YouTube to get news \cite{gallup2019}.

Our sample is drawn from low and middle income areas of the city of Lahore; the second largest city of Pakistan with a population of around 11 million. We focus on low and middle income areas as we expect that digital literacy will be lower for these strata of the urban population.
We recruit our sample from the field, instead of relying on online survey platforms that are less representative of low digital literacy users \cite{guess_dl,AMT}.
We examine the effectiveness of two educational interventions using a cluster randomized design, where individuals within a small geographic cluster (of 100 m x 100 m) are assigned to either control, treatment 1 or treatment 2 status. In order to minimize the potential of spillovers within this small area, we carry out randomization at the cluster rather than the individual level. Our final sample consists of 150 geographic clusters, with 5 individuals per cluster, summing to a total of 750 individuals divided equally into control, treatment 1, or treatment 2. Online Appendix A contains details of the sampling, survey procedures, and randomization.

\subsection{Baseline}
In the first visit, we carry out the baseline survey which includes demographic and social media use questions, followed by questions to measure digital literacy. We capture digital literacy using two types of questions (i) basic skills such as the ability to use social media without assistance, connect to WiFi and/or mobile data, use Google search and read English on social media, and (ii) specific skills related to social media applications measured by asking whether the individual is aware of and uses common application features. We ask about sixteen common WhatsApp and Facebook features (eight for each application).\footnote{Facebook features are as follows: create post, like, share, comment, privacy settings, report user, hide, and sponsored post. WhatsApp features are view chat, reply, record audio message, forward message, delete message, report user, block user, and message seen.} These questions are consistent with prior work, which validated survey measures of digital literacy through participant observation, and found that composite variables of survey questions that measure users’ knowledge of computer- and Internet-related terms and functions are good predictors of actual digital literacy \cite{survey_measure,update_survey_measure}.

After gathering these baseline characteristics, we measure beliefs and perceptions about news on social media. To do so, we ask a series of questions about a set of actual true and fake news stories drawn from social media. All individuals view the same set of stories, however, we randomize the order in which they are shown. In the baseline we show three true and fake news stories each, covering current events and general interest topics. The news stories are shown in the form of screenshots of messages, posts, or tweets, similar to how users would typically receive news on social media. Some of the news stories are in English only, while others are completely or partially in the national language, Urdu. It is fairly common for people to receive news in English as a large fraction of content on social media is in English. In order to make sure that the respondents understand the content, they are first asked to carefully view a printed version of the screenshot of the news and then asked to listen to an audio recording of the news. If the news was in English the audio translates it into Urdu. After viewing and hearing the news we ask a series of questions to ask whether they recalled seeing the news before, their beliefs about whether the news is true or not, their reactions, and their level of engagement with the news. Our main outcome of interest is the response to the question, ``Right now do you think the news is true or not?" We use this question to create our primary outcome, which is an indicator equal to 1 if they correctly identify news, either fake or true.

In addition to the fake and true news stories, we also show one placebo (or constructed) news story. This story is similar in content and format to our list of actual fake and true news stories, but was not drawn from social media news. The responses on the baseline placebo news item are used to measure and control for false recall, that is, if an individual is more or less likely to say that they recall seeing the news before, just because they are part of a study or due to any design feature of our study. False recall might also reflect pure recall error on the part of the individual which would happen even outside the study setting. At the end of the baseline survey, individuals in the treatment arms receive one of the educational interventions described below. The baseline survey and delivery of treatments in the treatment arms is carried out by a pair of male and female enumerators. 

\subsection{Educational Interventions}
After the baseline survey, individuals in the first treatment group watch a three-minute video about fake news in Urdu language. The video explains briefly and simply what is fake news, why it is generated, and the negative consequences of fake news. The remaining part of the video is dedicated to explaining the following three tips to identify fake news: (i) untrustworthy or missing source, for example, unknown author, unverified account, or absence of an authentic link supporting the news, (ii) poor quality of news, that can be spotted visually, for example, altered images or videos, informal or incorrect language, or excessive use of hashtags and emojis, and (iii) content or language that reflects bias, and tries to provoke extreme reactions and emotions. These tips are consistent with the Facebook’s ``Tips to Spot False News", which were developed in collaboration with the nonprofit First Draft and subsequently promoted at the top of users’ news feeds in 14 countries in April 2017. A variant of these tips was later distributed by WhatsApp in newspaper advertisements in Pakistan and India in 2018 \cite{fb_tips_v2,guess_experiment}.

Individuals in the second treatment group watch the video and then receive personalized feedback based on how successful they were in identifying each of the three fake news items shown at baseline. The personalized feedback was delivered verbally by the enumerators using pre-designed and printed feedback cards. The feedback consisted of two components: (i) a score of 0 or 1, based on whether the individual correctly identified the fake news item as fake and (ii) drawing attention to the features of the fake news item, that is source, quality, or bias, which could be used to identify the news as fake. The feedback cards were designed to be encouraging and informational, ensuring that the individuals understand why they got something right or wrong. The feedback was delivered regardless of whether they identified the fake news item to be fake or true. 
\subsection{Endline and COVID-19 Follow-up Survey}
We conducted our baseline survey and the two endline surveys between May 2019 to July 2019. The first endline survey comprised six fake and three true news items, which were shown to the participant in the same manner as in the baseline.  For each news item, we ask the same set of questions as in the baseline. In the first endline, we also include two placebo news items to assess experimenter demand effects (or Hawthorne effects) of our treatments (i.e., do the treatments make individuals act in a manner that they perceive is desired by the experimenter, which in this case is identifying news as fake). The second endline was carried out 4 to 6 weeks after the first endline. It consisted of five fake and three true news stories presented in a similar manner as the baseline and first endline. 

We were able to reach 731 of the RCT participants in the first endline (response rate of 97.5\%). Online Appendix Table \ref{table:survey_response_rates} shows that there are no systematic differences in the attrition rate across the three groups. 92\% of the first endline surveys were completed by the fifth day after the baseline and the remaining were completed by the ninth day.  We were able to reach 744 RCT participants (response rate of 99.2\%). Overall, we find that attrition was random and not correlated with the treatment status.

In addition to these two endline surveys that were conducted in the same format as the baseline, we also conducted a follow-up phone survey in September 2020. We were able to successfully reach 82\% of the participants who were part of our RCT. The response rate was balanced across the three groups (online Appendix Table \ref{table:survey_response_rates}). The phone survey exclusively focused on news statements related to the COVID-19 pandemic. In this survey, we included seven fake news items and three true news items. For each news item we asked two questions (i) whether the individual had heard the news before, and (ii) to what extent would they agree with the news statement (ranging from agree, disagree to not sure). In this survey, our goal was to assess the extent to which individuals were affected by COVID-19 related misinformation, and whether our treatments had any effect in making them less susceptible to fake news.

\subsection{Empirical Strategy}
In order to find the average treatment effect of each intervention on the probability of correctly identifying a news item as true or fake, we estimate the following equation:
\begin{equation}
    Y_{ik} = \alpha_0 + \alpha_1 X_{i}+ \beta_1 T_{1i} + \beta_2 T_{2i} +  \epsilon_{ik}
\end{equation}
where $Y_{ik}$ is the outcome of individual $i$ for news items shown post treatment, $X_{i}$ is a vector of baseline controls (described later). $T_{1i}$ is an indicator for treatment 1 and $T_{2i}$ is an indicator for treatment 2. $\beta_1$ and $\beta_2$ are the average treatment effects for treatment 1 and treatment 2, respectively.
The outcome variable is equal to 1 if the individual correctly identifies fake news as fake or true news as true, and 0 otherwise.
We standardize this variable by using the mean and standard deviation of the control group so that treatment effect is estimated in terms of standard deviation changes.

We also examine the effect of our interventions on how individuals engage with the news items. Specifically, we measure the effect on reporting extreme emotions. This variable is zero if the respondent reports feeling neutral, 1 if they report feeling positive or negative and 2 if they report feeling extremely positive or negative. We also study the effect on each of the following binary variables: whether the individuals say they will discuss the news with friends or family, search online about the news, and share news on social media, or do nothing. Since our interventions are designed to make individuals aware of common features of fake news, we also evaluate the effect on the reasons mentioned by the respondents for why they believe a news item to be true or fake. Specifically, we construct binary variables for whether an individual mentions prior political or religious beliefs, source problem, quality problem, or biased content, as reasons for why they consider a news item to be true or fake. We also standardize these outcomes by the mean and standard deviation of the control group. In addition, we examine the existence of heterogeneous treatment effects by the following baseline moderators: digital literacy, gender, age, and education.

\subsection{Sample Characteristics and Balance Test}
We now describe some important characteristics of our sample. In terms of household characteristics, we find that the median household monthly expenditure was Rs. 30000 (approximately USD 200).\footnote{According to the Household Integrated Economic Survey 2018-19 conducted by the Pakistan Bureau of Statistics, the average monthly household expenditure in urban areas was Rs. 47362. The mean expenditures by quintile were Rs. 23515, Rs. 29130, Rs. 32931, Rs. 38689, and Rs. 64681 \cite{income}.} Households have 6 to 7 members with 2 to 3 members who are social media users.\footnote{We define such a user as someone of age 18 or above having a WhatsApp or Facebook account.} Roughly half of our sample is female, the mean age is 29 years and the most common educational qualification is Intermediate (Grade 12).

In terms of social media use, 96\% of our sample has a WhatsApp account, 46\% has a Facebook account, 11\% has a Twitter account, and spend approximately 3 hours per day on social media.
The popularity of end-to-end encrypted messaging services like WhatsApp and low penetration of open platforms (e.g., Twitter) in our sample makes it challenging to directly observe online behaviors.
Tracking individuals on Facebook is also difficult due to privacy and safety concerns especially among female users \cite{safe_digital}.
In addition to using social media to view content, about two-thirds say they use social media to share content and 40\% say they use social media to create content.

When looking at digital literacy, we find that approximately 85\% report being able to use social media without assistance,  83\% say they can connect to WiFi or mobile data, followed by 78\% reporting the ability to use Google search, and 63\% report being able to read English text on social media.
In our empirical work, we take the sum of these four binary indicators (standardized by the mean and standard deviation of the control group) as our measure of basic digital literacy skills, referred to as the basic digital literacy score.
In addition, we also create indices using data on knowledge and usage of common WhatsApp and Facebook features. While the basic digital literacy score is effective at separating those at the low end and middle of the digital literacy spectrum in our sample, the latter indices are better at separating individuals at the high end. This is consistent with prior work that recommends having different measures for capturing variations at the low end and the high end of digital literacy \cite{guess_dl}. For our empirical work, we use principal component analysis to define separate scores for knowledge and usage of WhatsApp and Facebook features, standardizing these measures using the mean and standard deviation of the control group. We refer to these as the WhatsApp Score and Facebook Score.

In online Appendix Table \ref{balance}, columns 1 to 3, we report the summary statistics of our RCT sample by treatment status. In columns 4 to 6, we report the p-value for the pairwise difference in means across the groups, that is treatment 1 versus control (column 4) and treatment 2 versus control (column 5), and treatment 1 versus treatment 2 (column 6). Standard errors are clustered at the sample grid level. Of the twenty-four baseline characteristics reported, we find significant differences across the groups for only four characteristics. Participants from the control group have a greater likelihood of owning a smartphone (relative to both treatment groups). They are also more likely to report knowing more WhatsApp features (relative to both treatment groups). Participants in the second treatment group are more likely to falsely recall placebo news shown at baseline (relative to the control group), and likely to use fewer Facebook features (relative to the first treatment group). All other characteristics are not significantly different across the three groups.

Therefore, we conclude that our randomization worked and that the three groups are similar in terms of baseline socio-economic, demographic, digital literacy, and social media use variables.

\section{Results}
\label{sec:results}
\subsection{Treatment Effect}
Table \ref{table:ate} shows the average treatment effect of our interventions on correctly identifying news. While we do not find any significant effect of treatment 1, treatment 2 improves the probability of correctly identifying fake news by 0.14 standard deviations relative to the control group (statistically significant at the 5 percent level) when the endlines are pooled.
At endline 1 and endline 2, the effects were 0.12 and 0.15 standard deviations, respectively. The effect is larger but less precisely estimated in the second endline. The precision is improved by adding baseline controls to the regressions. In Section \ref{sec:alternate} we perform various checks to examine the extent to which these effects are driven by experimenter demand effects, that is respondents are labeling news as fake since as they are trying to please the experimenter. 

\begin{table}[ht]
\begin{center}
\begin{threeparttable}
\footnotesize

  \centering
\captionsetup{font=footnotesize} 
\caption{Average Treatment Effect}
 \label{table:ate}
    \begin{tabular}{lcccccc}
    \toprule
    \multicolumn{7}{l}{Dependent variable: Correctly identify news}\\
    \hline
    &\multicolumn{2}{c}{Fake}&\multicolumn{2}{c}{True }&\multicolumn{2}{c}{All }\\
    &(1)&(2)&(3)&(4)&(5)&(6)\\
    \toprule
\multicolumn{7}{l}{\textit{Panel A - Pooled Effect}}\\
\\
Treatment 1& -0.022&-0.020&-0.005&-0.022&-0.016&-0.019\\
&(0.065)&(0.062)&(0.041)&(0.041)&(0.048)&(0.047)\\
Treatment 2&0.136**&0.146**&-0.048&-0.052&0.071&0.076*\\
&(0.061)&(0.059)&(0.040)&(0.039)&(0.045)&(0.044)\\
&&&&&&\\
Observations &   \multicolumn{2}{c}{8,111} &   \multicolumn{2}{c}{4,428} & \multicolumn{2}{c}{12,539}\\   
\hline
\multicolumn{7}{l}{\textit{Panel B - Endline 1}}\\
\\
Treatment 1&	-0.008&	-0.001&	-0.043&	-0.045&-0.019&	-0.024\\
&	(0.061)	&(0.058)&	(0.066)&	(0.063)&	(0.053)	&(0.052)\\

Treatment 2&	0.121**&	0.128** &	-0.034&	-0.038&0.069&	0.072\\
&	(0.060)	&(0.056)&	(0.061)&	(0.061)&	(0.051)	&(0.049)\\
&&&&&&\\
Observations  &   \multicolumn{2}{c}{4,386} &   \multicolumn{2}{c}{2,193}& \multicolumn{2}{c}{6,579}\\   
\hline

\multicolumn{7}{l}{\textit{Panel C - Endline 2}}\\
\\
Treatment 1&	-0.035&	-0.032&	0.036&	0.019&-0.009&	-0.013\\
&	(0.097)	&(0.094)&	(0.052)&	(0.053)&	(0.066)&	(0.065)\\

Treatment 2&	0.154&	0.167*&	-0.060&	-0.063&0.074&	0.081\\
&	(0.095)	&(0.092)&	(0.045)&	(0.046)&	(0.063)	&(0.061)\\
\\
Observations  &   \multicolumn{2}{c}{3,725} &   \multicolumn{2}{c}{2,235}& \multicolumn{2}{c}{5,960}\\   
\bottomrule
    \end{tabular}   
 \begin{tablenotes}
   \footnotesize 
   \item Notes: Standard errors clustered by sample grid in parentheses. * p$<$0.1,  ** p$<$0.05,*** p$<$0.01. Columns 1, 3, and 5 include no controls. Controls in columns 2, 4, and 6 are household monthly expenditures, fraction of social media users in household, baseline score, baseline false recall, gender, age, education, hours on social media, smart phone ownership, social media as primary news source, four basic digital literacy indicators, indicators for view, share and create content, and indicator for second endline in pooled regressions.
   \end{tablenotes}
\end{threeparttable}
\end{center}
  \label{tab:ate}
\end{table}

Columns 3 and 4 in Table \ref{table:ate} show the average treatment effect on the sample of true news items only. We do not find any significant effect of either treatment on correctly identify true news. One plausible explanation for no effect on true news could be that individuals were generally aware of the true news in both the treatment and control groups and that explains why we do not find any significant effect of the treatment. However, across both the treatment and control groups, the majority of individuals do not recall seeing the true news items before (65\% in the control group and 70\% in the treatment 2 group) suggesting general awareness about true news does not drive our results.
However, the negative coefficient of the treatment variable for true news (although insignificant) might be due to experimenter demand effects or increased skepticism towards social media news. We explore these explanations of the results in Section \ref{sec:alternate} and \ref{sec:discern}.

In columns 5 and 6 we report the effect on all news items which captures the overall effect of the treatment in improving discernment between true news and fake news. This can be regarded as a stronger test of whether our treatment causes individuals to be more discerning social media users. In the results with both endlines pooled together and baseline controls, we find a positive effect of 0.076 standard deviations (significant effect at 10\%) of treatment 2. This effect is dominated by improvements in correctly identifying fake news, which points to improved discernment. The results are similar in magnitude (although less precisely estimated) without controls and separately across the endlines.

The effects of treatment 2 on identifying fake news found in Table 1, compare favorably with previous literature on similar educational interventions. For example, \shortcite{guess_experiment} find an effect size of 0.11 in their online convenience sample in India and no effect on a rural sample that received tips to spot fake news in person. Similarly, \citeN{badrinathan} does not find any significant effect. The effect in this study is also larger than the effect size of 0.08 found in \shortcite{tags}, as a result of giving general warnings about misinformation on social media users in the US recruited from MTurk. 

\begin{table}[t]
\centering
\begin{threeparttable}
\footnotesize
\captionsetup{font=footnotesize}
\caption{Engagement with News}
 \label{table:engage_news}
\begin{tabular}{l llp{2cm}ll}

\toprule
Dependent variable:& Emotional&Do nothing&Discuss with friends/family & Search online & Share\\
&(1)& (2) & (3) & (4)& (5)\\
\toprule

Treatment 1  &	-0.015 &-0.048 &0.033&-0.001&	0.045\\
&(0.062)&(0.089)&(0.081)&(0.045)&(0.059)\\
P-value& 0.808&0.589&0.685&0.979&0.445\\
Q-value& 1.000&1.000&1.000&1.000&1.000\\
\\
Treatment 2 &	-0.140**&	0.148*&	-0.165**	&-0.024&	-0.051\\
&	(0.059)	&(0.081)&	(0.074)&	(0.042)	&(0.055)\\
P-value& 0.019&0.070&0.026&0.569&0.352\\
Q-value& 0.070&0.076&0.070&0.295&0.214\\
\bottomrule
\end{tabular}
\begin{tablenotes}
      \footnotesize
      \item Notes: Sample is all fake and true news items shown post-treatment in the two endlines. All regressions include baseline controls and endline wave dummy. Standard errors clustered by sample grid in parentheses. * p$<$0.1,  ** p$<$0.05,*** p$<$0.01.  P-value is the unadjusted p-value and q-value is the sharpened False Discovery Rate-adjusted two-stage q-value.
\end{tablenotes}
\end{threeparttable}
\end{table}

Table \ref{table:engage_news} shows the effect of the treatments on secondary outcomes measuring engagement with the news items. Those who received treatment 2 are 0.13 standard deviations less likely to report non-neutral (positive or negative) emotional reactions (statistically significant at the 5 percent level), relative to the control group.
This suggests that treatment 2 may have had a mitigating effect on initial reactions to news due to increased awareness about the nature of fabricated news on social media. They are also 0.14 standard deviations more likely to say that they will do nothing after hearing the news (significant at the 10 percent level), 0.16 standard deviations less likely to say that they will discuss the news with friends and family (significant at the 5 percent level). We do not find any effect on searching online for the news or intent of sharing it on social media. Table \ref{table:engage_news} also includes the sharpened q-values using the two-stage procedure described in \cite{anderson}, to adjust p-values for multiple hypothesis testing. All outcomes that have p-values below 0.05 or between 0.05 and 0.1 have sharpened q-values below 0.10. In additional results available upon request, we find that there are similar effects for each of these outcome variables when looking at fake and true news separately. This suggests that after receiving the treatment, individuals are less likely to feel strong emotions, less likely to discuss the news with others, but there is no change in behavior related to verification or sharing of the news.
\begin{table}[t]
\centering
\begin{threeparttable}
\footnotesize
\captionsetup{font=footnotesize}
\caption{Reasons for Why News is True/Fake }
 \label{table:reasons_true_fake}
\begin{tabular}{l llll}

\toprule
Dependent variable:& Prior Belief&Source&Quality & Bias\\
&(1) & (2) & (3) & (4)\\
\toprule
Treatment 1&-0.082&	0.065&	-0.026&	-0.016\\
&	(0.053)	&(0.069)&	(0.065)&	(0.046)\\
P-value& 0.124&0.343&0.696&0.726\\
Q-value& 0.985&1.000&1.000&1.000\\
\\
Treatment 2&-0.112**&	0.142*&	-0.047&	-0.010\\
&	(0.049)	&(0.075)&	(0.065)	&(0.041)\\
P-value& 0.024&0.061&0.47&0.816\\
Q-value& 0.107&0.107&0.457&0.690\\
\bottomrule
\end{tabular}
\begin{tablenotes}
      \footnotesize
      \item Notes: Sample is all fake and true news items shown post-treatment in the two endlines. All regressions include baseline controls and endline wave dummy. Standard errors clustered by sample grid in parentheses. * p$<$0.1,  ** p$<$0.05,*** p$<$0.01.  P-value is the unadjusted p-value and q-value is the sharpened False Discovery Rate-adjusted two-stage q-value.
     
\end{tablenotes}
\end{threeparttable}
\end{table}

Since our educational interventions focused on teaching individuals specific ways to identify fake news, in Table \ref{table:reasons_true_fake}, we estimate the effect of the treatments on the reasons stated for labeling news as true or false. We find that individuals are 0.11 standard deviation less likely to mention prior beliefs (statistically significant at the 5 percent level) and 0.14 standard deviation more likely to mention a source problem as the reason for their answer (statistically significant at the 10 percent level). We also find that for outcomes with p-values less than 0.1, the sharpened q-values were 0.107. This suggests that treatment 2 helps individuals in engaging more analytically with news (e.g., by helping them examine the characteristics of the presented news such as its source). If we separate the effect by fake and true news, we find that the effect on prior beliefs and source are driven by fake news items, which further suggests that individuals are learning and applying what was taught in the second intervention.

\subsection{Experimenter Demand Effects}
\label{sec:alternate}
In this section, we examine the extent to which the effect of treatment 2 might be driven by experimenter demand effects. Since our intervention explains to participants the negative consequences of fake news, it might leave an impression that the researchers are expecting participants to identify news items as fake. Thus, in the endline surveys, participants are likely to identify news as fake, in order to satisfy the researchers. However, outside the experimental setting, individual behavior may not necessarily change. Such changes in behavior among the treatment group are known as experimenter demand effects (or Hawthorne Effects) \cite{duflo_nber_tool,ede}. We examine the plausibility of experimenter demand effects in explaining the observed treatment effect. 

\begin{table}[t]
\centering
\begin{threeparttable}
\footnotesize
\captionsetup{font=footnotesize}
\caption{Treatment Effect by Social Desirability Trait }
\label{table:effect_by_social}
\begin{tabular}{l p{1.5cm} p{1.5cm}p{1.5cm}}
\toprule
\multicolumn{4}{l}{Dependent variable: Correctly identify news}\\
\hline
 & Fake & True &  All \\
&(1) & (2) & (3)\\
\toprule
Treatment 1 & -0.003 & 0.004 & -0.001	\\
 & (0.073) & (0.057) & (0.059) \\
 Treatment 1 x Social Desirability & -0.035 & -0.058 & -0.043\\
& (0.096) & (0.082) & (0.077)\\

Treatment 2	 & 0.141* & -0.044 & 0.076  \\
& (0.074) & (0.059) & (0.059)\\
Treatment 2 x Social Desirability  &  0.011 & -0.018 & 0.001\\
& (0.095) & (0.085) & (0.076)  \\ 

\bottomrule
\end{tabular}
\begin{tablenotes}
    \footnotesize
      \item Notes: Sample is all news items shown in two endline surveys. All regressions include baseline controls and endline dummy. Standard errors clustered by sample grid in parentheses.* p$<$0.1,  ** p$<$0.05,*** p$<$0.01.
\end{tablenotes}
\end{threeparttable}
\end{table}

Our first robustness check involves adopting a modified version of the approach used in \cite{ede_check}, to zoom into a sub-sample of our experimental subjects in which social desirability trait may be  dominant. Specifically using our baseline survey data, we identify individuals who answered `no' when asked if they have ever fallen for fake news. If the participants sense that the experiment is about correctly identifying fake news, and they would like to please the experimenter, they might be more likely to respond `no' to this question. These individuals should also make a greater effort to identify news as fake in the endline, and the treatment effect should be driven by this sub-sample. 44\% of our respondents fall into this category. Table \ref{table:effect_by_social} shows that the additional treatment effect for this sub-sample is small and insignificant as compared to the treatment effect on the remaining sample for fake news items. For true news we would expect a larger negative coefficient for the social desirability sample if respondents were identifying true news as fake due to experimenter demand effects. We do not observe any evidence for such effects. Overall, we do not find any evidence of a significant differential response to the treatment in the social desirability sample when pooling together all news items.  

\begin{table}[tbp]
\centering
\begin{threeparttable}
\footnotesize
\captionsetup{font=footnotesize}
\caption{Falsification Test with Different Rates of labeling News as Fake}
\label{table:falsification1}
\begin{tabular}{l lll}
\toprule
\multicolumn{4}{l}{Dependent variable: Correctly identify news}\\
\hline
 & Fake & True &  All \\
 &(1) & (2) & (3) \\
\toprule
\textit{Panel A: Label 59\% of news items as fake}\\
\\
Treatment 2	 & -0.215***&	-0.681***&	-0.379***  \\
& (0.051)&	(0.038)	&(0.039))\\
\hline
\textit{Panel B: Label 75\% of news items as fake}\\
\\
Treatment 2	 & 0.169***&	-1.045***&	-0.260*** \\
& (0.050)&	(0.036)	&(0.039)\\
\hline
\textit{Panel C: Label 90\% of news items as fake}\\
\\
Treatment 2	 & 0.467***&	-1.292***&	-0.154*** \\
& (0.048)&	(0.033)&	(0.036)\\
\bottomrule
\end{tabular}
\begin{tablenotes}
    \footnotesize
      \item Notes: See text for construction of dependent variable.  All regressions include baseline controls and endline dummy. Treatment 1 indicator is not significant in any regression. Standard errors clustered by sample grid in parentheses.* p$<$0.1,  ** p$<$0.05,*** p$<$0.01.
\end{tablenotes}
\end{threeparttable}
\end{table}

As a second robustness check, we carry out a falsification test to ascertain whether treatment 2 just increased the probability of labeling more news items as fake without making the treated individuals' judgements more accurate. In this test, we simulate a treatment group in which participants randomly identify news items as fake.
Note that the base rate of calling any news item fake was 59\% in treatment 2 compared to 54\% in the control group. We create three simulated dependent variables in which the responses of control and treatment 1 are kept intact, whereas the responses of treatment 2 individuals are replaced by the simulated responses with the rate of saying fake fixed at 59\%, 75\%, and 90\%, respectively. We standardize these dependent variables by the mean and standard deviation of the control group.

We find that if participants in treatment 2 had been randomly guessing 59\% of the news items as fake, the treatment effect on fake news would have been significantly negative, as respondents would have correctly identified fewer items as fake than the control group. If we raise the rate of saying fake in the treatment group to 75\%, then we observe a treatment effect on fake news comparable to the actual treatment effect. However, in this case, the discernment of true news is significantly lower than that in the control group, as the simulated treatment individuals are also labeling more true news as fake. Finally, if we raise the rate of saying false in the treatment group to 90\%, there is a large positive effect on fake news but a significantly larger negative effect on true news. Thus, the results in Table \ref{table:falsification1} support the interpretation that treated individuals are not randomly guessing about the accuracy of news items, in response to being primed about fake news or trying to act in conformance with the experimenters objective. 

Finally, we also analyze the treatment effect on placebo news items. These are synthetically constructed fake news items, which were never circulated on social media or any other media. There are two placebo news items shown to respondents in the first endline survey, as part of the list of news fake and true stories. Each placebo news item contained highly implausible news but also included an (invalid) Web link highlighting the source of news, which could only be checked for accuracy by visiting the link. If respondents were simply terming more news items as fake, then we would expect a positive treatment effect on placebo news, that is respondent identifies news as fake more often than the control group. While around 90\% of respondents (in both control and treatment groups) termed placebo news items as fake, we find a small negative effect (though not significant) on correctly identifying placebo news as fake news (see online Appendix Table \ref{table:effect_on_placebo}).
We do not find any effect on secondary outcomes such as engagement with the news or reasons for terming news as fake.
This suggests that experimenter demand effects are unlikely to explain treatment 2 effects.

Overall, these three robustness checks provide evidence against the explanation that experimenter demand effects are driving the treatment effect.

\subsection{Heterogeneous Effects}
\label{sec:het_eff}

\begin{figure}[t]
    \centering
    \captionsetup{font=footnotesize}
    \includegraphics[width=0.9\textwidth]{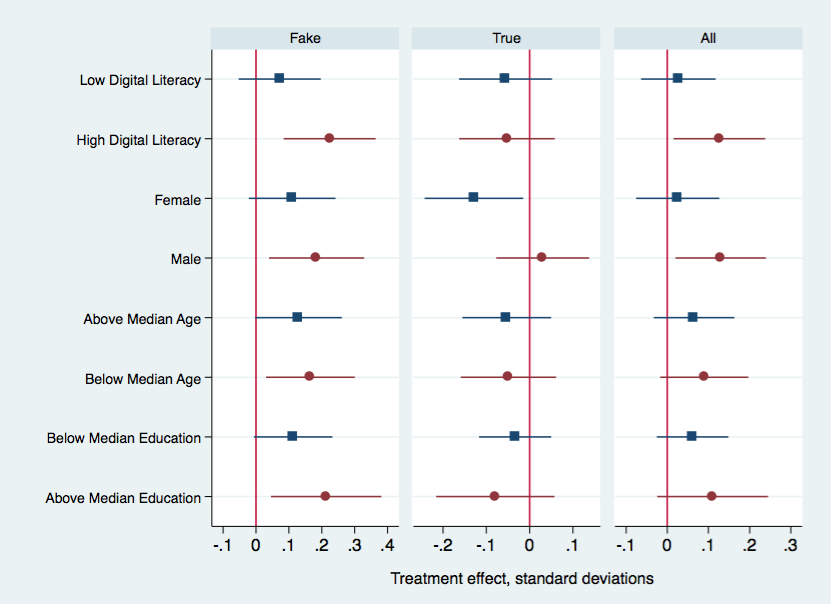}
    \caption{Treatment Effect by Digital Literacy and Demographic Characteristics}
    \caption*{\footnotesize{Notes: The figure shows the effects of treatment 2 by digital literacy score, gender, age, and education. High and low digital literacy is defined as one standard deviation above or below mean digital literacy score. Median age is 27 years and median education level is Intermediate/Grade 12.  All regressions include standard controls. Error bars show 95\% confidence interval, with clustered standard errors.}}
    \label{fig:het}
\end{figure}

We now investigate the existence of heterogeneous treatment effects by four baseline characteristics: digital literacy, gender, age, and education. Figure \ref{fig:het} displays the heterogeneous effects, for each sub-group defined by these characteristics, separately for fake, true, and all news pooled together. We discuss these results below.

Our primary measure of digital literacy is the basic digital literacy score, which is an aggregation of the four questions used to capture Web-use, technical skills as well as the ability to understand and evaluate information online (as described in Section \ref{sec:context}).  We standardize the basic digital literacy score by the mean and standard deviation of the control group. Figure \ref{fig:het} shows the treatment effect estimates for individuals with high and low digital literacy (defined using one standard deviation above and below the mean respectively). The effect on the sub-group of low digital literacy individuals is positive at 0.07 standard deviations but not statistically significant. On the other hand, the effect on the sub-group of high digital literacy individuals is almost twice as large as the average treatment effect. There is no effect on true news for both sub-groups. Overall, the treatment effect is positive and significant for high digital literacy individuals when we pool all news together, which suggests improved discernment between true news and fake news.

In online Appendix Table \ref{table_whatsapp_fb}, we further assess the heterogeneity of treatment effect by two additional measures of digital literacy, namely the WhatsApp and Facebook scores which capture application-specific knowledge and usage. These indices are effective at capturing the variations across individuals with high basic digital literacy scores. We find that treatment 2 has a positive and significant effect on individuals with higher WhatsApp scores. The interactions of treatment 2 (and even treatment 1) indicators with the score are positive and statistically significant. We do not find any significant differential effect by Facebook score. Facebook users only comprise half of our sample and they are likely to be more educated and more experienced social media users, which can explain why the effects are no different for those with higher Facebook scores. 

Separating the treatment effect by gender shows that males benefit more from the treatment, both in terms of recognizing fake news and distinguishing between fake news and true news.
The effect on fake news for females is smaller relative to males and statistically insignificant, while the effect on recognizing true news is negative, suggesting greater skepticism in females regarding the accuracy of news (see Figure \ref{fig:het} and online Appendix Table \ref{table:hetero_by_demo}). 

One reason for the lower take-up can be that females also had lower levels of digital literacy. \footnote{According to an ITU study conducted between 2016-18, females are 11.6\% less to possess basic ICT skills (e.g., copying or moving a file or folder, sending emails with attachments, and connecting and installing a new device) compared to males in 9 low-income and lower-middle income countries for which the data was available \cite{itu-gender}.} While females have the same years of formal education as males and 69\% of them report being able to read English on social media versus 57\% of the males, on other dimensions of digital literacy they lag behind males. For example, 75\% report being able to connect to WiFi and mobile data (versus 91\% of males), 74\% report being able to use online search (versus 83\% of males), and 82\% report being able to use social media without assistance (versus 88\% of the males). Thus females are likely to have less experience in navigating social media and assessing the credibility of information encountered online. 

To check whether low digital literacy could be mitigating the treatment effect for females, we check for heterogeneous effects by digital literacy for females. Online Appendix Table \ref{table:hetero_for_females_by_dl} shows that the treatment effect using the interaction between treatment and digital literacy score for the female sample only. We find positive interactions of both treatment indicators for fake news items, suggesting greater ability of high digital literacy females to recognize fake news. For true news, we observe that the second treatment had a negative and significant effect which is an average effect. There is no significant difference for females by digital literacy. Taken together, these additional results suggest that although higher digital literacy females are more likely to correctly identify fake news, on average there is also increased skepticism towards true news by females relative to the control group. We further confirm this finding in \ref{sec:discern}.

To study the average treatment effect by age and education, we divide our sample using the median age and education which is 27 years and Intermediate/Grade 12, respectively. We find suggestive evidence of larger effects on fake news for younger and more educated sub-groups, no significant effect on true news, and no significant improvement in the overall ability to distinguish true and fake news (see Figure \ref{fig:het} and online Appendix Table \ref{table:hetero_by_demo}). \footnote{In additional results (available upon request), we explore whether age or education might be driving differences observed by digital literacy, by including interactions between age and education with treatment indicators, along with digital literacy interactions. The results indicate that digital literacy is the most important moderator, while the other two moderators are not significant.}

Taken together, these results suggest that there is scope for designing customized educational interventions that are specifically targeted towards users at the lowest end in our sample in terms of digital literacy and females. For these sub-populations, the existing interventions may be complex and ineffective. As a result, designing interventions that require even lower cognitive effort and prior training in analytical reasoning can help them discern fake news better.

\subsection{Does Treatment 2 Improve Discernment?}
\label{sec:discern}
While there does not seem to be evidence for experimenter demand effects driving the treatment effect, respondents may still be terming news as fake without improvement in discernment and simply due to increased skepticism towards news on social media. To assess this, we separate the treatment effect due to the overall change in the probability of labeling a news item as fake from the effects due to the change in correctly identifying fake news items as fake. Specifically, we run a regression in which the dependent variable is equal to one if a respondent labels a news item as fake and the treatment indicator interacted with the news item being actually fake. 

In column 1 of Table \ref{table:sayfalse} where we pool all fake and true news items from both endlines, we find that on average the second treatment effect does raise the likelihood of labeling news as fake. However, the effect is not significant. The differential effect on fake news is 1.5 times larger than the average effect but also not precisely estimated. The total effect on fake news, is thus due to both an average increase in probability of terming a news item as fake and a differential effect. This is consistent with our previous findings in Table \ref{table:ate}, where we find suggestive evidence of negative effects on true news and Figure \ref{fig:het}, where we show that females are driving the negative and significant effect on true news. 

Therefore, in columns two and three of Table \ref{table:sayfalse}, we separate the treatment effect on males and females. We find that males are not just labeling more news as fake as a result of the treatment. The entire effect of the treatment on males is driven by improved discernment of fake news. However, for females the opposite is true.  There is no additional effect on correctly identifying a fake news item as fake. This lends support to the interpretation that male respondents became more discerning about fake news as a result of receiving the second treatment. Females do not experience any improved discernment and are just as likely to label true news items as fake. Moreover, we do not find any differential effect for females exhibiting social desirability trait (see online Appendix table \ref{table:effect_by_social_gender}). These results are suggestive of increased skepticism about news on social media amongst females after receiving the treatment, rather than purely experimenter demand effects.

\begin{table}[tbp]
\centering
\begin{threeparttable}
\footnotesize
\captionsetup{font=footnotesize}
\caption{Effect of Treatment 2 on Labeling News as Fake}
\label{table:sayfalse}
\begin{tabular}{l p{1.5cm} p{1.5cm}p{1.5cm}}
\toprule
\multicolumn{4}{l}{Dependent variable: Label news as fake}\\
\hline
 & All & Males &  Females \\
& (1) & (2) & (3)\\
\toprule

Treatment 2	 & 0.050&	-0.022&	0.129**  \\
& (0.037) & (0.051) & (0.052)\\

Treatment 2 x Fake 	 & 0.079&	0.184**&	-0.019  \\
& (0.073) & (0.095) & (0.089)\\

Treatment 2 Total Effect on Fake News & 0.129**&	0.162**	&0.110*\\
& (0.063) & (0.051) & (0.059)\\

Observations & 12,539 &6,098 & 6,441\\
\bottomrule
\end{tabular}
\begin{tablenotes}
    \footnotesize
      \item Notes: Sample is all news items shown in two endlines. Dependent variable is equal to one if respondent labels a news item as fake, standardized by mean and standard deviation of control group. All regressions include baseline controls and endline dummy. Total Effect is the sum of coefficient on treatment indicator and interaction between treatment and fake news. The indicator for treatment 1 and interactions with fake news are not significant.  Standard errors clustered by sample grid in parentheses.* p$<$0.1,  ** p$<$0.05,*** p$<$0.01.
\end{tablenotes}
\end{threeparttable}
\end{table}

\subsection{News Characteristics}
We now analyze the effect of our interventions by the characteristics of the news items shown in the endline surveys. We classify the news items along four dimensions related to the content and format of the news. To study the effect of content, we create an indicator equal to one if the news had political content. If individuals already hold prior beliefs about the topic, then we should expect the effect of our treatment to be mitigated for those individuals whose prior beliefs conform with the fake news. We also create a second indicator which is equal to one if the news is related to personal health, safety, and food. We group these topics as they convey information that can lead to immediate action by the individual. For these types of news items, we expect that if the individual takes the wrong action based upon the misinformation then they can incur substantial physical or psychological costs. We term these high-cost news items. We formalize these costs in our conceptual model discussed in Section \ref{sec:model}. 
To study the effect of news format, we create indicators for whether the news was conveyed in Urdu (as opposed to English), and whether the news contained only text (relative to a combination of text and image).
\begin{table}[t]
\centering
\begin{threeparttable}
\footnotesize
\captionsetup{font=footnotesize}
\caption{Treatment Effect by News Characteristics}
 \label{table:news_char}
\begin{tabular}{l llll}
\toprule
\multicolumn{5}{l}{Dependent Variable: Correctly identify news}\\
\hline

Characteristic:& Political&High Cost&Urdu & Text\\
&(1)&(2)&(3)&(4)\\
\toprule


Treatment 2&0.101**	&0.005	&0.094*	&0.081**\\
&	(0.046)	&(0.039)	&(0.050)	&(0.041)\\
Treatment 2 x Characteristic& -0.082**&	0.175***&	-0.059&	-0.014\\
&	(0.040)&	(0.057)&	(0.042)	&(0.062)\\
\bottomrule
\end{tabular}
\begin{tablenotes}
      \footnotesize
      \item Notes: Sample is all news items shown in the two endlines. News characteristics are political content, high cost content, Urdu format, text only format. See text for details of detailed definition of characteristics. All regressions include standard controls. The indicator for treatment 1 and interactions with news characteristic are not significant. Standard errors clustered by sample grid in parentheses.  * p$<$0.1,  ** p$<$0.05,*** p$<$0.01.
     
\end{tablenotes}
\end{threeparttable}
\end{table}

Table \ref{table:news_char} shows that the treatment effect is indeed smaller for political news, suggesting that prior beliefs do have an offsetting effect. The treatment effect is larger for news items which we term as the high-cost news items. Furthermore, we also observe that individuals in the control group are more likely to fall for such types of fake news as compared to all other types of fake news shown in the endlines.\footnote{On average, 42\% of the individuals in the control group were not able to identify the high-cost fake news items, while 27\% were not able to identify the remaining fake news items.} This is a novel insight as it shows that people are more likely to fall for misinformation on matters that touch upon their everyday lives and behavior, but there is also greater scope for teaching people how to correctly identify such misinformation. Table \ref{table:news_char} also shows a smaller treatment effect for news that was presented in Urdu and news that was presented in text-only format. The latter finding suggests that non-textual elements of news can help individuals discern the accuracy of the news.

\subsection{COVID-19 Misinformation Follow-up}
To assess the longevity of our treatment effect and effects on behaviors outside the experimental setting, we conducted a follow-up phone survey focusing on assessing beliefs and behaviors related to the COVID-19. The phone survey was conducted in September 2020, fifteen months after the interventions. We were able to reach 89.3\% of the respondents in our original RCT sample. Online Appendix Table \ref{table:survey_response_rates} shows that attrition was random and unrelated to the treatment status of the individual. In this survey we assessed the accuracy of beliefs about news related to COVID-19. We asked people about seven popular fake news items and three true news items related to COVID-19. For each news item, individuals were first asked whether they had heard the news before. Then they were asked to state to what extent they agreed with the statement. They had the option to agree, disagree, or say they were unsure. In addition, respondents were also asked questions about information-seeking and preventative behaviors during the pandemic.

The survey was conducted in September 2020, when the infection rates had declined considerably in Pakistan from the peak daily rate of around 7000 cases to 500 daily cases \cite{covid19-pakistan}. We found that all three true news items (related to hand-washing, wearing a mask, and physical distancing) were universally known and everyone surveyed had heard them before. Moreover, 98\% of the individuals had accurate beliefs about the true news items. For the seven fake news items, we found that 81\% recalled having heard the news before, and 79\% had accurate beliefs, that is they disagreed with the fake news. Therefore, it appears that even though misinformation related to COVID-19 had been circulating widely, by the time we conducted our phone survey, a large fraction of individuals had formed accurate beliefs about such news, potentially leaving little room for a treatment effect to be observed if one exists.\footnote{Several large-scale COVID-19 awareness campaigns were launched by the Government of Pakistan in collaboration with various organizations using print, electronic, radio and social media across the country \cite{ncoc,unicef_zong}. From March 19 and May 7 in 2020, 1028.5 million COVID-19 awareness messages were sent in Urdu, English, and regional languages to mobile users across Pakistan. Among these included using Ring Back Tones (RBTs) to spread awareness messages, which were activated on 131.7\,million subscribers’ mobile devices \cite{pta_campaign}. According to a nationally representative survey conducted by Gallup Pakistan, ringtone messages had a positive impact on knowledge, perception, and behaviors related to COVID-19 in Pakistan \cite{caller_tunes}}
\begin{table}[t]
\centering
\begin{threeparttable}
\footnotesize
\captionsetup{font=footnotesize}
\caption{Treatment Effect in Covid Misinformation Follow-up}
 \label{table:covid_followup}
\begin{tabular}{l lll ll}

\toprule
\multicolumn{6}{l}{Dependent Variable: Accuracy of News}\\
\hline
& \multicolumn{2}{l}{Fake}& \multicolumn{1}{l}{True}&\multicolumn{2}{l}{All}\\
& (1)&(2)& (3)&(4)&(5)\\
\toprule
\\
Treatment 1&-0.041&0.020&	0.024&	-0.022&0.037\\
&	(0.049)&(0.056)	&(0.028)& 	(0.036)&(0.053)\\
Treatment 1 x Recall &&-0.078 & &&-0.069\\
&&(0.075)&&&(0.061)\\

Treatment 2&0.010&0.089*&	0.012&	0.010&0.099**\\
&	(0.052)&(0.050)	&(0.029)&	(0.038)&(0.046)\\
Treatment 2 x Recall &&-0.092 &&&-0.099* \\
&&(0.069)&&&(0.056)\\
Recall& & -0.439***&&&-0.245***\\
&&(0.053)&&&(0.045)\\
Observations&\multicolumn{2}{l}{4,347}&\multicolumn{1}{l}{1,863}&\multicolumn{2}{l}{6,210}\\
\bottomrule
\end{tabular}
\begin{tablenotes}
      \footnotesize
      \item Notes: Sample is news items in the COVID-19 phone survey. See text for details about the construction of the dependent variable. Recall is an indicator equal to 1 if recalled seeing or hearing news before. Columns 2 and 5 estimated with interactions of treatment variables with recall. All respondents report recalling true news, and interactions are omitted. All regressions include standard controls. Standard errors clustered by sample grid in parentheses. * p$<$0.1,  ** p$<$0.05,*** p$<$0.01.
\end{tablenotes}
\end{threeparttable}
\end{table}

In Table \ref{table:covid_followup} we estimate the effect of our treatments on the accuracy of beliefs about COVID-19 related news. Accuracy is equal to 1 if the individual correctly disagrees with a fake news item, 0.5 if they say they are not sure and 0 otherwise. Similarly, for true news, accuracy is equal to 1 if the individual correctly agrees with a true news item, 0.5 if they say they are not sure and 0 otherwise. As before, we standardize the dependent variable using the mean and standard deviation of the control group. Columns 1, 3, and 5 show the treatment effect on fake, true, and all news items respectively using our main specification with baseline controls. We do not find any significant effect of the treatment on the accuracy of beliefs. Given that our follow-up was conducted after the end of the first COVID-19 wave, was our sample (in control and treatment) generally aware of the accuracy of news related to COVID-19, and therefore, there is no effect of our treatment? For true news, all news items were recalled by everyone whereas 81\% of individuals had heard about the fake news items, signifying the general prevalence of public health information related to the pandemic. A related explanation for this prevalence is that the high personal cost (e.g., harmful health effects) associated with not having accurate beliefs about COVID-19 led individuals in both treatment and control groups to exert greater effort to form accurate beliefs when they heard such news. Another possibility is that the treatment effect simply faded over time.

To help differentiate between the above two explanations, and assess whether general awareness is playing a role for fake news, we estimate the treatment effect separately for two sub-groups: those who recall and do not recall hearing the news before. Since recall is a post-treatment variable, we first examine whether recall rates of COVID-19 news items are affected by the treatment in online Appendix Table \ref{table:te_recall}. We do not find any significant effect of either treatment on the probability of an individual recalling a COVID-19 related fake news. This indicates that awareness of COVID-19 news was generally prevalent in our sample and the interventions did not have a differential effect on recall, thereby making it balanced across the groups. This gives us confidence that we can study the treatment effects on COVID-19 fake news by recall. The results are reported in column 2 and 5 of Table \ref{table:covid_followup} for fake news and all news respectively.

The coefficient on the recall variable is negative, which indicates that in the control group individuals who had previously heard a fake news item had less accurate beliefs relative to those who do not recall seeing a news item before.  Among the individuals who recalled having heard the fake news before, it is likely that some of them had formed correct beliefs, but there were others who had formed inaccurate beliefs which were not updated over time. These might be individuals who have strong prior beliefs which are consistent with the fake news and therefore, they are less likely to form correct beliefs (see Section \ref{sec:model}). There was no significant difference across treatment and control groups within this sub-group. This suggests that effects due to general awareness about fake news had permeated both groups, but there were some individuals who did not update their beliefs possibly due to strongly held prior beliefs. 

Furthermore, we also find that among those who did not recall hearing the fake news item before, treatment 2 improved their accuracy of beliefs by 0.09 standard deviations relative to the control group (significant at the 10 percent level). This suggests that the effect of treatment 2 on those who did not recall seeing a fake news item before, or in the absence of general awareness, though smaller, likely persisted over time.

\begin{figure}[t]
    \centering
    \captionsetup{font=footnotesize}
    \includegraphics[width=0.8\textwidth]{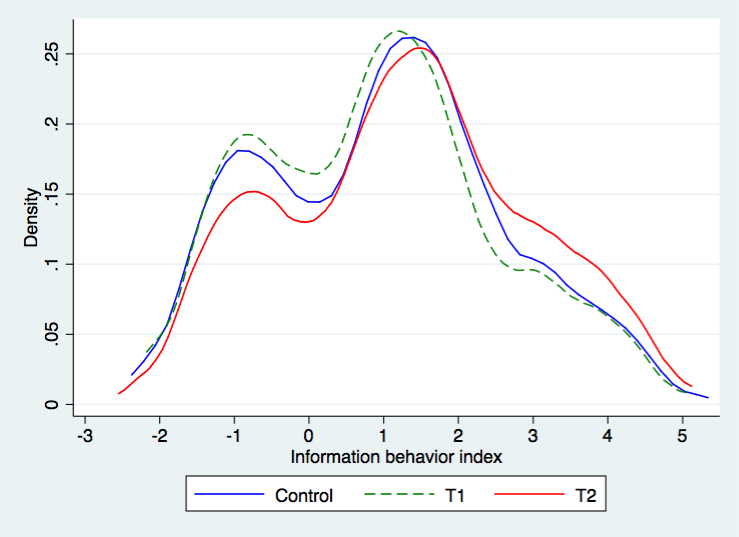}
    \caption{COVID-19 Information-seeking and Sharing Behaviors}
    \caption*{\footnotesize{Notes: The figure shows the kernel density plots of a composite index of information-seeking and sharing behaviors of control, treatment 1 and 2 as measured in the COVID-19 phone survey. Index consists of five variables measuring frequency of use of formal news sources for COVID-19 information, trust in formal news sources for COVID-19 information, frequency of use of social media sources for COVID-19 information, trust in social media sources for COVID19 information, and sharing COVID-19 information on social media. }}
    \label{fig:covid-infoseek}
\end{figure}

\begin{figure}[t]
    \centering
    \captionsetup{font=footnotesize}
    \includegraphics[width=0.8\textwidth]{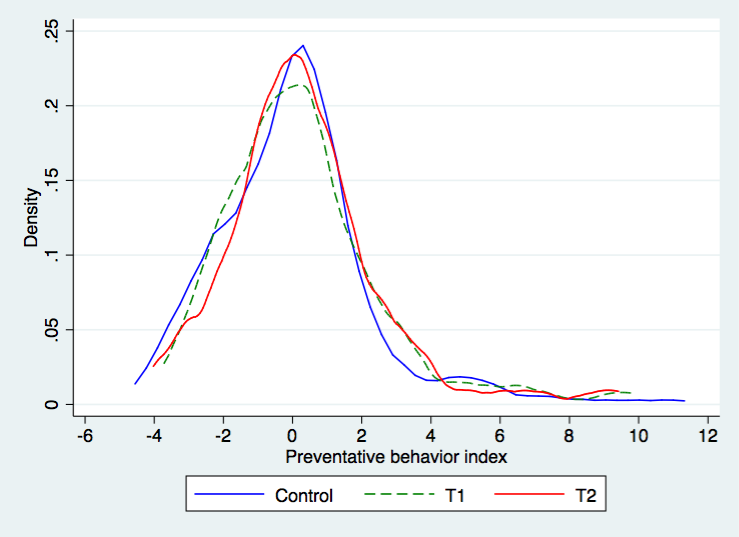}
    \caption{COVID-19 Preventative Behaviors}
    \caption*{\footnotesize{Notes: The figure shows the kernel density plots of a composite index of preventative behaviors of control, treatment 1 and 2 as measured in the COVID-19 phone survey. Preventative behaviors index consists of four variables measuring social distancing, attitude towards sending children to school, mask ownership, and perceived market prices of masks.}}
    \label{fig:covid-sop}
\end{figure}

Next, we assess whether the treatments had any effect on  self-reported information-seeking and sharing behaviors of respondents related to COVID-19 (see Figure  \ref{fig:covid-infoseek}). We construct an \emph{information behavior index} using a list of five survey questions that measure the frequency of use of formal news sources (World Health Organization and government sources) and social media news sources (WhatsApp, Facebook, Twitter, and YouTube) for obtaining information about COVID-19, trust in these sources for providing reliable COVID-19 related information, and frequency of sharing COVID-19 information on social media. We code these variables such that the index takes increasing values the higher is the frequency of use of formal sources, trust in formal sources increases, the lower is the frequency of use of social media sources and trust in social media, and lower is the frequency of sharing COVID-19 related information on social media. Each of these variables is standardized using the mean and standard deviation of the control group. We then take the sum of the standardized variables to create the index. The questions corresponding to each of these measures and their coding is explained in online \ref{sec:app_indices}. 

We find that respondents in the treatment 2 group use formal sources more, have greater trust in formal news sources, and are less likely to share COVID-19 information on social media relative to the control group. There is no difference in frequency of use and trust in social media sources. When combined as an index, we find that the difference in the means of the index for treatment 2 and control is positive (0.26 standard deviations) and significant at the 10 percent level. As shown in \ref{fig:covid-infoseek}, we find that the distribution of the information behavior index is shifted rightward for the treatment 2 group relative to the control group. This reflects greater usage and trust in formal sources to obtain COVID-19 related information and a lower frequency of sharing COVID-19 related information on social media.

Finally, we construct a \emph{preventative behavior index} to measure respondents' willingness to observe (self-reported) COVID-19 preventative behaviors. We use a list of four survey questions that measure behaviors related to mask ownership and social distancing. Specifically, we use questions asking about ownership of different types of masks, the market price of different types of masks, interacting with people outside the house (in close proximity within 2 arm's length), willingness to send children to school if they reopen. Each of these variables is coded such that it takes increasing value the more preventative is the behavior. For price, we take higher prices as corresponding to higher willingness to pay and thus greater inclination towards preventative behaviors. The questions corresponding to each of these measures and their coding are reported in online \ref{sec:app_indices}. We standardize individual variables by the mean and the standard deviation of the control group and aggregate the standardized variables to create a preventative behaviors index. 

We find that individuals in the second treatment group are likely to own masks, observe social distancing, and have a lower willingness to send children to school. However, their reported market price of masks is lower than that of the control group. When combined as an index, we find that the distribution of preventative behaviors is rightward shifted and the mean of the index is higher for the treatment 2 group relative to the control group although the difference in means is not statistically significant (Figure \ref{fig:covid-sop}). Overall, we find suggestive evidence that individuals in the treatment 2 group are more likely to engage in preventative behaviors. Therefore, we conclude that while it is challenging to make a definitive conclusion about the effectiveness of treatment 2 on the perceived accuracy of COVID-19 related fake news, there is evidence to suggest that treatment 2 had an impact on information-seeking and preventative behaviors related to COVID-19.

\section{Mechanisms}
\label{sec:model}
In this section, we propose a conceptual framework to explain how educational interventions change an individual's accuracy of beliefs regarding news seen on social media. Our conceptual framework combines motivated reasoning and analytical reasoning, which are commonly used to explain how individuals assess information and form beliefs. We provide evidence consistent with the model's predictions using the data from our experiment.

\subsection{Conceptual Framework}
We develop a conceptual framework to model factors that determine how individuals form beliefs about the accuracy of news encountered on social media. Two broadly accepted influences on the beliefs individuals develop about the accuracy of news include motivated reasoning and analytical reasoning.
The effects of motivated reasoning on different forms of judgement have been widely documented and supported by prior studies \cite{why_reason,motivated_reasoning,hugh_right,kahan_2007,partisanship}.
For example, individuals are more likely to uncritically accept claims that support their political ideology compared to evidence that is inconsistent with their prior belief \cite{strictland}. Thus, motivated reasoning can at least partly explain why individuals may be susceptible to fake news, by accepting stories that conform to their prior beliefs.
Another growing body of work in the tradition of dual-process theories of human cognition argue that deliberative and analytical reasoning can lead to sound judgement \cite{kohl,evan_analytical,stan_analytical,crt,3_stage,rand_lazy}. Thus, applying analytical reasoning can increase the accuracy of beliefs about fake news and true news.

Our framework \emph{combines} these views by modelling individuals as motivated reasoners, i.e., they like news conforming to their prior beliefs but are also able to apply analytical reasoning to form beliefs about the news.
As a result, we show that an individual can correctly identify fake news even if it perfectly conforms to her prior belief if the cost of not finding the `truth' is high or the cost of analytical reasoning required to discern the truth is low. For example, an individual will expend more time and effort verifying a fake news item related to COVID-19 (which may have serious health implications for the individual) compared to a fabricated story about the chances of rain during the week.\footnote{Another example relates to \emph{social distancing} behaviors during the COVID-19 pandemic. An individual might engage in less social distancing because their cost of distancing is greater (e.g., the individual would lose more income) or because the benefits of distancing are smaller (e.g., the individual is at a lower risk of infection) \cite{polarization_health}.}
Similarly, individuals having higher levels of digital literacy, who can engage in fact-checking, will pay a lower cost for applying analytical reasoning as they are likely to arrive at the truth with less time and effort. In both cases, individuals will form correct beliefs about the fake news even if it conforms to their prior beliefs.

A key insight of our model is that by raising digital literacy or lowering the cost of analytical reasoning, an individual's accuracy of beliefs about news can be improved. Therefore, in our model educational treatments work by lowering the cost of analytical reasoning by providing individuals with simple decision rules to evaluate the credibility of news without expending significant time and effort.

\subsubsection{Individual Utility}
We build upon \cite{market_news} to develop a simple model to explain how individuals determine beliefs about a news story.
Individuals receive a news item, $N$, about a topic of interest (e.g., state of the economy, COVID-19), which can be true or fake, and individuals form beliefs about the accuracy of the news item. We assume that $N$ is drawn from the distribution $\mathcal{N}(0,v)$. For simplicity, we assume that both true and fake news items are drawn from the same distribution. Individuals will have accurate beliefs about $N$ whenever they can correctly identify a fake news item as fake and a true news item as true. We also assume that individuals' prior beliefs, $B$, about the underlying topic follow the distribution $\mathcal{N}(b,v)$, where $b \neq 0$ captures the expected bias in their prior beliefs.
The individual gets a baseline utility $\bar u$ from reading the news item, $N$. Individuals like to read news that confirms their beliefs, therefore, they incur a disutility when $N \neq B$. The disutility increases with the square of the distance between the news and the individual’s beliefs with an elasticity of $\phi$. Thus, the direct cost of reading news inconsistent with one's beliefs is $\phi(N-B)^2$.

Individuals can exert analytical reasoning to determine the accuracy of the presented news. Analytical reasoning depends on cognitive effort and productivity, which captures how good the individual is at reasoning given the effort exerted. Individuals with high digital literacy are expected to have higher productivity and thus achieve a high level of reasoning with a low amount of effort. In order to identify fake news correctly, individual needs to exert a high level of analytical reasoning. Conversely, if the individual exerts a low level of analytical reasoning, they form inaccurate beliefs about the news. The level of reasoning required to form correct beliefs will vary across fake and true news and within each category also vary with news characteristics. We denote cognitive effort  by $a^i_j$, where $i\in\{H,L\}$ and $j\in\{T,F\}$ corresponds to true news ($T$) and fake news ($F$). Effort is costly and the individual incurs a total cost $\xi (a^i_j)^2$, for exerting $a^i_j$ units of effort to determine the accuracy of the news. Note that the marginal cost of effort increases with $\xi$, while the productivity of effort declines with $\xi$.

We assume that individuals care about knowing the truth. So when they form correct beliefs by applying a high level of analytical reasoning, they receive the benefit of knowing the truth. Benefit has a fixed component and a variable component, which depends on the degree of inconsistency between the news and prior belief. Specifically, we assume that the benefit of correctly identifying news is $\delta (1+(N-B)^2)$ if the news is fake and $\frac{\delta} {(1+(N-B)^2)}$ if the news is true. These terms reflect that individuals derive the maximum benefit if their prior belief is confirmed after reasoning. Therefore, for fake news, the benefit is lowest if $N=B$ (i.e., when prior beliefs are consistent with fake news but the news was correctly identified as fake). For true news, the benefit is maximized if $N=B$ (i.e., when prior beliefs are consistent with the true news item that was identified as true). 

If the individual has incorrect beliefs, they incur the cost of not knowing the truth. This can be a psychological or physical cost that arises due to inaccurate beliefs such as harmful actions taken due to believing and acting upon misinformation. For fake news, the cost is modeled as  $\chi(1+(N-B)^2)$, which is increasing in the distance of news from prior belief. For true news, we assume that this cost is $\frac{\chi}{(1+(N-B)^2)}$, which is decreasing in the distance of news from prior beliefs.

With this setup, let $U^i_j$ be the utility of concluding that the news item is $i\in\{T, F\}$, when the underlying state is $j\in\{T, F\}$. 

The utility of concluding that a fake news item is fake is given by:
\begin{equation}
U_F^F = \bar{u} - \phi(N-B)^2 -\xi (a^H_F)^2 +\delta(1+(N-B)^2)
\end{equation}
And the utility of incorrectly concluding that a fake news item is true is given by:
\begin{equation}
U_F^T = \bar{u} - \phi(N-B)^2 -\xi (a^L_F)^2 - \chi(1+(N-B)^2)
\end{equation}
Note that in this case, the individual incurs the cost of not knowing the truth, which increases with the distance between and $N$ and $B$. There is no additional cost of overcoming bias because the reader believes the fake news in this case.

Similarly, the utility of concluding a true news item as true is given by:
\begin{equation}
U_T^T = \bar{u} - \phi(N-B)^2 -\xi (a^H_T)^2 + \frac{\delta}{1+(N-B)^2}
\end{equation}
Note that when the news is true and it aligns with the individual's prior belief, no additional cost is incurred. Moreover, the derived benefit is decreasing in the distance between the news and the prior belief. 

And finally, the utility of concluding a true news item as fake is given by:
\begin{equation}
U_T^F = \bar{u} - \phi(N-B)^2 -\xi (a^L_T)^2 - \frac{\chi}{1+(N-B)^2}
\end{equation}

\subsection{Discerning between Fake News and True News}
Let $p$ be the probability that a user will correctly identify a fake news item as fake. This is the probability that utility of concluding that a fake news item is fake exceeds the utility of concluding that a fake news item is true.
\begin{equation}
    p = P(U_F^F > U_F^T) = P\left(-\xi a_H^2 +\delta (1+(N-B)^2)<-\xi a_L^2 -\chi(1+(N-B)^2)\right)
\end{equation}
\begin{equation}\label{eq:discern}
    = P\left( N-B > \sqrt{\frac{\xi A}{\delta + \chi} - 1} \right)
\end{equation}
where $A=(a^H_F)^2-(a^L_H)^2$.

Equation \ref{eq:discern} shows that $p$, the probability of correctly identifying fake news as fake, is increasing in $A$ and decreasing in both $\delta$ and $\chi$. Online Appendix B shows that using the underlying distributions of $N$ and $B$, the probability of correctly identifying a fake news can be written as:
\begin{equation}
    p = 1 - \Phi\left(\sqrt{\frac{\frac{\xi A}{\delta+\chi} - 1}{2v}}\right)
\end{equation}

Next, we discuss the predictions of the model regarding the probability of correctly identifying fake news, as news characteristics and analytical reasoning are varied. We also examine the consistency of the model predictions with the experimental data.\footnote{We do not analyze $P(U_T^T > U_T^F)$ but it can be solved in the same manner as $P(U_F^F > U_T^F)$.}

\subsection{Effect of News Characteristics on $p$}
We examine how the probability of correctly identifying fake news changes with three different characteristics of the news which are captured in our model, (i) the expected distance between news and the bias of the reader (ii) cost of not knowing the truth, and (iii) benefit of knowing the truth. Our results are stated in Propositions 1 to 3. See online Appendix B for all proofs. 

\begin{proposition}
When the expected bias $b$ decreases, the probability of correctly identifying fake news increases.
\end{proposition}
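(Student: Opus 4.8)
The plan is to work directly from the closed-form expression for $p$ derived just above the proposition, namely
\begin{equation*}
p = 1 - \Phi\left(\sqrt{\frac{\frac{\xi A}{\delta+\chi} - 1}{2v}}\right),
\end{equation*}
and trace through how the bias parameter $b$ enters. The subtlety is that $b$ does not appear explicitly in this formula; it enters through the distribution of the random variable $N-B$. Recall $N \sim \mathcal{N}(0,v)$ and $B \sim \mathcal{N}(b,v)$, so $N-B \sim \mathcal{N}(-b, 2v)$ (assuming independence, as the Appendix B derivation implicitly does). The quantity being bounded in the event $\{N-B > \sqrt{\xi A/(\delta+\chi) - 1}\}$ has a threshold that is free of $b$, so decreasing $b$ shifts the \emph{mean} of $N-B$ upward (from $-b$ toward $0$ and beyond), which makes the event more likely.

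Concretely, I would first standardize: write $p = P(N - B > c)$ where $c = \sqrt{\xi A/(\delta+\chi) - 1}$ is a constant not depending on $b$. Then $p = P\!\left(Z > \frac{c + b}{\sqrt{2v}}\right) = 1 - \Phi\!\left(\frac{c+b}{\sqrt{2v}}\right)$ where $Z$ is standard normal. (I should double-check the sign convention in the Appendix's version of the closed form — the displayed formula has $\sqrt{(\xi A/(\delta+\chi) - 1)/(2v)}$ inside $\Phi$ with no $b$, which suggests they may be evaluating at $b=0$ or folding $b$ into a redefinition; I would reconcile this, but the economically relevant object is $1 - \Phi((c+b)/\sqrt{2v})$.) Differentiating with respect to $b$ gives $\frac{\partial p}{\partial b} = -\frac{1}{\sqrt{2v}}\phi\!\left(\frac{c+b}{\sqrt{2v}}\right) < 0$, where $\phi$ is the standard normal density, which is strictly positive everywhere. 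Hence $p$ is strictly decreasing in $b$, equivalently $p$ increases as $b$ decreases, which is exactly the claim.

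The main obstacle — really the only thing requiring care rather than routine calculus — is getting the dependence of the $N-B$ distribution on $b$ right and making sure it is consistent with whatever normalization Appendix B used when it dropped $b$ from the final expression for $p$. One has to be careful that the threshold $c$ genuinely does not depend on $b$ (it is built only from $\xi$, $A$, $\delta$, $\chi$, which are reasoning-cost and payoff primitives, not prior-bias primitives), and that the only channel through which $b$ operates is the location of $N-B$. A secondary point worth a sentence is the well-definedness of $c$: the argument under the square root, $\xi A/(\delta+\chi) - 1$, should be assumed nonnegative (otherwise $p = 1$ trivially and the comparative static is degenerate but still weakly consistent with the statement), so I would either state this as a maintained assumption or note that in the degenerate case the claim holds vacuously/weakly. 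Everything else is a one-line monotonicity argument via the sign of a normal density.
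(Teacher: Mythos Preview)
Your proposal is correct and follows essentially the same route as the paper. The paper's Appendix B writes $p = 1 - \Phi(D)$ with $D = \dfrac{b + \sqrt{\xi A/(\delta+\chi) - 1}}{\sqrt{2v}}$ (so your suspicion that the displayed main-text formula silently drops $b$ was right), and then computes $\partial p/\partial b = -\phi(D)/\sqrt{2v} < 0$, which is exactly your derivative; the paper additionally assumes $D>0$, but as you implicitly recognize this is unnecessary for the sign since $\phi>0$ everywhere.
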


\begin{proposition}
\,As the cost of not knowing the truth, $\chi$, increases, the probability of correctly identifying fake news increases.
\end{proposition}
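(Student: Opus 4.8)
The plan is to work directly from the closed-form characterization of $p$ established earlier in the excerpt, namely Equation \ref{eq:discern},
\begin{equation*}
p = P\left( N-B > \sqrt{\frac{\xi A}{\delta + \chi} - 1} \right),
\end{equation*}
together with its equivalent integrated form $p = 1 - \Phi\bigl(\sqrt{(\xi A/(\delta+\chi) - 1)/(2v)}\bigr)$. Both expressions depend on $\chi$ only through the threshold $\tau(\chi) = \sqrt{\xi A/(\delta+\chi) - 1}$, so the whole claim collapses to a single-variable monotonicity argument: I would show that $\tau$ is strictly decreasing in $\chi$, and then invoke the fact that the right-tail probability $P(N-B > \tau)$, equivalently $1-\Phi(\cdot)$, is strictly decreasing in its threshold.

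First I would fix the sign of $A = (a^H_F)^2 - (a^L_H)^2$. By the model's construction, correctly classifying a fake item requires the \emph{high} level of analytical reasoning, whereas misclassification corresponds to the \emph{low} level; since high effort exceeds low effort and $\xi > 0$, we have $\xi A > 0$. The key step is then immediate: as $\chi$ rises the denominator $\delta + \chi$ strictly increases, so $\xi A/(\delta+\chi)$ strictly decreases, the radicand strictly decreases, and hence $\tau(\chi)$ strictly decreases. Because $N-B$ is continuously (normally) distributed with everywhere-positive density, its survival function is strictly decreasing, so a smaller threshold yields a strictly larger $p$. Equivalently, differentiating the closed form gives $\partial p/\partial\chi = -\Phi'\!\left(\tau/\sqrt{2v}\right)\cdot \partial_\chi\!\left(\tau/\sqrt{2v}\right)$, and since $\Phi' > 0$ and $\partial_\chi \tau < 0$, we conclude $\partial p/\partial \chi > 0$.

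The main obstacle is a domain issue rather than a genuine difficulty: the argument presumes the radicand $\xi A/(\delta+\chi) - 1$ is positive, so that $\tau$ is real and $p \in (0,1)$. When $\chi$ is large enough that $\xi A/(\delta+\chi) < 1$, the threshold becomes non-positive and we are in the corner regime where the individual always identifies fake news correctly, i.e.\ $p = 1$. I would handle this by stating the comparative static as \emph{weakly} increasing globally and strictly increasing on the interior region where $0 < p < 1$; in the corner region $p$ is pinned at its maximum and monotonicity holds trivially. A secondary point to flag is the notational clash between the disutility elasticity $\phi$ and the normal density; I would write the latter as $\Phi'$ (or $\varphi$) to keep the chain-rule step unambiguous.
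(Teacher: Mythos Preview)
Your proposal is correct and follows essentially the same route as the paper: the paper computes $\partial p/\partial \chi$ directly from the closed form and observes it is positive because the standard normal density is positive and (by assumption) $D>0$, which is precisely your chain-rule step. Your extra discussion of the corner regime where the radicand turns non-positive, and of the $\phi$/$\varphi$ notational clash, goes beyond the paper's proof, which handles the former implicitly by assuming $D>0$ and ignores the latter.
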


\begin{proposition}
As the benefit of knowing the truth increases, $\delta$, the probability of correctly identifying fake news increases.
\end{proposition}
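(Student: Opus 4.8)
The plan is to read the monotonicity off directly from the reduced-form expression for $p$ derived just before the proposition, namely $p = 1 - \Phi\!\left(\sqrt{(\xi A/(\delta+\chi) - 1)/(2v)}\right)$, or equivalently from Equation~\ref{eq:discern}. Write $h(\delta) = \xi A/(\delta+\chi) - 1$ for the radicand and $\tau(\delta) = \sqrt{h(\delta)/(2v)}$ for the threshold appearing inside $\Phi$. Since $\Phi$ is strictly increasing and $p = 1 - \Phi(\tau(\delta))$, it suffices to show that $\tau(\cdot)$ is decreasing in $\delta$ (strictly so on the relevant range); then $p$ is increasing in $\delta$, which is the claim.

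First I would record the maintained sign conventions of the model: $\xi > 0$, $v > 0$, and $A = (a^H_F)^2 - (a^L_F)^2 > 0$, the last of which is exactly the modelling assumption that correctly flagging a fake item requires a strictly higher level of analytical reasoning than the low-effort (believe-it) alternative. Under $\xi A > 0$ we have $\frac{d}{d\delta}\bigl[\xi A/(\delta+\chi)\bigr] = -\xi A/(\delta+\chi)^2 < 0$, so $h$ is strictly decreasing in $\delta$; hence on the region where $h(\delta) > 0$, $\tau(\delta) = \sqrt{h(\delta)/(2v)}$ is strictly decreasing in $\delta$, being the composition of the strictly decreasing $h$ with the increasing square root.

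Then I would conclude: on the nondegenerate range of $\delta$ with $\xi A/(\delta+\chi) > 1$, the chain rule gives $\partial p/\partial\delta = -\varphi(\tau(\delta))\,\tau'(\delta) = \varphi(\tau(\delta))\cdot \dfrac{\xi A}{4v(\delta+\chi)^2\,\tau(\delta)} > 0$, where $\varphi = \Phi'$ denotes the standard normal density; equivalently, without differentiating, $p$ is the survival function of $N-B$ evaluated at the threshold $\sqrt{h(\delta)}$, and a survival function is nonincreasing, so a falling threshold raises $p$. For completeness I would dispose of the boundary case $\xi A/(\delta+\chi) \le 1$, in which the defining event in Equation~\ref{eq:discern} has probability one, $p \equiv 1$, and the claim still holds weakly. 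The same monotone-threshold logic underlies Propositions 1 and 2.

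There is essentially no hard step here: the only points requiring care are (i) invoking the sign assumption $A>0$ (together with $\xi,v>0$) so that $\xi A/(\delta+\chi)$ is genuinely decreasing in $\delta$ rather than constant or increasing, and (ii) being explicit that $a^H_F$ and $a^L_F$ are parameters of the discernment problem that do not themselves vary with $\delta$ — and noting that if one instead lets the effort levels be chosen optimally, the conclusion is unchanged by an envelope argument, since at the optimum the derivative of $p$ with respect to the effort levels vanishes. Given these, the result is immediate from monotonicity of $\Phi$ and the chain rule.
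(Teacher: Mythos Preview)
Your argument is correct and follows essentially the same route as the paper: differentiate $p$ with respect to $\delta$ through the $\Phi$ of a threshold that depends on $\delta$ only via $\xi A/(\delta+\chi)$, and read off the sign. Your version is in fact a bit more careful than the paper's own proof---you make the sign assumptions $\xi,v,A>0$ explicit, handle the degenerate case $\xi A/(\delta+\chi)\le 1$, and your chain-rule expression retains the factor $\xi A$ in the numerator that the paper's stated derivative appears to drop---but the underlying idea is identical.
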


In Table \ref{table:news_char}, we examined how the treatment effect varies with news characteristics. Here we draw the link between features of the model and our findings. First, we found a relatively small effect for political news. We expect that if individuals have strongly held prior beliefs in terms of their political leaning, this is likely to shift the mean of the belief distribution farther away from the mean of the news distribution. In the model, this translates into an increase in the bias in prior belief, which decreases the probability of correctly discerning political fake news. Second, we also found that the treatment effect is larger for news items related to health, safety, and food. We expect that believing fake news on such topics can have immediate negative effects if the individual takes the wrong action. Thus, such news items are likely to have a higher $\chi$ and therefore we observe individuals are more likely to identify these types of fake news. 

\subsection{Effect of Educational Interventions on $p$}

\begin{proposition}
As $\xi$ decreases  the probability of correctly identifying fake news increases.
\end{proposition}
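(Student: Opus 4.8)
The plan is to read the comparative static directly off the characterization of $p$ already established in the text. From Equation~\ref{eq:discern} we have
$$p = P\!\left(N-B > \tau(\xi)\right), \qquad \tau(\xi) = \sqrt{\frac{\xi A}{\delta+\chi}-1},$$
with $A=(a^H_F)^2-(a^L_F)^2$ the gap between the high level of reasoning needed to correctly flag a fake item and the low level that yields the lazy (wrong) conclusion; equivalently, from the displayed closed form, $p = 1-\Phi\!\big(h(\xi)\big)$ where $h(\xi)=\sqrt{\tau(\xi)^2/(2v)}$. The whole content of the proposition is the observation that $\tau$, hence $h$, is strictly increasing in $\xi$, so that $p = 1-\Phi(h(\xi))$ is strictly decreasing in $\xi$ — i.e., lowering $\xi$ raises $p$. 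Since educational interventions are modeled precisely as a reduction in $\xi$, this is the claim.

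Concretely I would proceed in three short steps. \textbf{Step 1 (signs).} Record that $A>0$ (forming the correct belief about a fake item requires strictly more analytical reasoning than accepting it), $\delta+\chi>0$ (the benefit of knowing the truth and the cost of not knowing it are nonnegative and not both zero), and $v>0$. \textbf{Step 2 (monotonicity).} Either differentiate — $h'(\xi)=\frac{1}{2h(\xi)}\cdot\frac{A}{2v(\delta+\chi)}>0$ wherever $h(\xi)>0$, whence $\frac{dp}{d\xi} = -\varphi\!\big(h(\xi)\big)\,h'(\xi)<0$ with $\varphi$ the standard normal density — or, avoiding calculus, note that the survival function $t\mapsto P(N-B>t)$ of the continuous, full-support random variable $N-B$ is strictly decreasing, and compose with the increasing map $\xi\mapsto\tau(\xi)$. \textbf{Step 3 (translation).} A negative derivative in $\xi$ is the same as "decrease $\xi$ $\Rightarrow$ increase $p$", which is the proposition.

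I do not expect a genuinely hard step; the only point requiring care is the domain of validity. The formula presupposes $\xi A/(\delta+\chi)\ge 1$, so that $\tau(\xi)$ is real (and $h(\xi)>0$ for the derivative argument). I would restrict attention to the interesting regime $\xi A/(\delta+\chi)>1$, where $p\in(0,1)$ and the monotonicity bites; in the complementary regime the threshold is nonpositive, $p$ sits at its upper value, and decreasing $\xi$ cannot lower it, so the (weak) conclusion still holds. One further sentence is worth including: the comparative static holds the required effort levels $a^H_F,a^L_F$ (hence $A$) fixed; if instead, consistent with the remark that productivity of effort declines with $\xi$, the effort needed to reach a given level of reasoning is taken to grow with $\xi$, then $A$ itself increases in $\xi$ and the effect is only reinforced — so the conclusion is robust to that modeling choice.
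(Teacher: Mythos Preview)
Your proposal is correct and follows essentially the same route as the paper: compute (or infer the sign of) $\partial p/\partial \xi$ from the closed-form $p=1-\Phi(\cdot)$ and conclude it is negative because $A>0$, $\delta+\chi>0$, $v>0$, and $\phi(\cdot)>0$. The paper's own proof is exactly this one-line derivative check; your write-up is slightly more careful in that you flag the domain restriction $\xi A/(\delta+\chi)>1$ needed for the square root (and for $h(\xi)>0$ in the chain rule), and you offer a calculus-free variant via monotonicity of the survival function composed with the increasing threshold $\xi\mapsto\tau(\xi)$, plus a robustness remark about $A$ possibly moving with $\xi$ --- none of which the paper includes, but all of which are harmless embellishments of the same argument rather than a different method.
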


\begin{proposition}
As the difference between $(a^H_F)^2$ and $(a^L_F)^2$ increases, the probability of correctly identifying fake news decreases.
\end{proposition}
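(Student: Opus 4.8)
The plan is to prove Proposition 6 by direct inspection of the closed-form expression for $p$ derived just above, namely
\begin{equation*}
p = 1 - \Phi\left(\sqrt{\frac{\frac{\xi A}{\delta+\chi} - 1}{2v}}\right),
\end{equation*}
where $A = (a^H_F)^2 - (a^L_F)^2$. Since $\Phi$ is the standard normal CDF, it is strictly increasing, and the square root is strictly increasing on its domain, so $p$ is a strictly decreasing function of the inner quantity $\frac{\xi A}{\delta+\chi} - 1$ whenever that quantity is nonnegative (so that the square root is real and we are on the relevant branch). The only term in that quantity that depends on $A$ is the linear factor $\frac{\xi}{\delta+\chi}\,A$, which is strictly increasing in $A$ because $\xi, \delta, \chi, v > 0$. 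Chaining these monotonicities gives that $p$ is strictly decreasing in $A$, and hence decreasing in $(a^H_F)^2 - (a^L_F)^2$, which is exactly the claim.

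Concretely, the steps I would carry out are: (i) recall from Equation~\eqref{eq:discern} and the subsequent display that $p = 1 - \Phi(g(A))$ with $g(A) = \sqrt{\big(\tfrac{\xi A}{\delta+\chi} - 1\big)/(2v)}$; (ii) note $\tfrac{d}{dA}\big(\tfrac{\xi A}{\delta+\chi} - 1\big) = \tfrac{\xi}{\delta+\chi} > 0$, so the argument of the square root is increasing in $A$; (iii) since $t \mapsto \sqrt{t/(2v)}$ is increasing for $t \ge 0$, conclude $g$ is increasing in $A$; (iv) since $\Phi$ is strictly increasing, $\Phi(g(A))$ is increasing in $A$, so $p = 1 - \Phi(g(A))$ is decreasing in $A$. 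Writing $\tfrac{dp}{dA} = -\phi_{\mathrm{std}}(g(A))\, g'(A) < 0$ (with $\phi_{\mathrm{std}}$ the standard normal density) makes the sign explicit if a derivative-based statement is preferred.

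The one genuine subtlety — and the main thing the proof must address rather than a computational obstacle — is the domain/interiority condition: the expression is only meaningful, and the comparative static only has bite, when $\tfrac{\xi A}{\delta+\chi} - 1 \ge 0$, i.e. when $p \in (0,1)$ rather than pinned at a boundary. So I would state at the outset that we restrict attention to the parameter region in which an interior solution obtains (which is precisely the region in which the discernment problem is nontrivial, as already assumed in the derivation of Equation~\eqref{eq:discern}); on that region the argument is immediate. The economic reading, which I would include as a one-sentence remark, is the natural one: $A$ is the extra cognitive effort (squared) needed to correctly flag fake news relative to the low-effort level, so a larger gap means the analytical route is more demanding and the individual is less likely to take it — exactly the channel through which educational interventions help, since (per Proposition~5 and the model setup) they act to compress this gap and lower $\xi$. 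This mirrors the structure of the proof of Proposition~5 and of Propositions~2--3, so it can be kept very short in the online appendix.
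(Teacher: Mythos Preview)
Your proposal is correct and essentially the same as the paper's proof: the paper simply computes the partial derivative $\partial p/\partial A = -\tfrac{\phi(D)\,\xi}{4v(\delta+\chi)D}$ (with $D$ the argument of $\Phi$) and signs it using $\phi(\cdot)>0$ and the assumption $D>0$, which is exactly the derivative form you write out in step (iv). The only cosmetic difference is that you foreground the composition-of-monotone-functions argument before giving the derivative, whereas the paper goes straight to the derivative; your explicit statement of the domain condition $\tfrac{\xi A}{\delta+\chi}\ge 1$ is a small clarification the paper leaves implicit in the assumption $D>0$.
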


In our model, $\xi$ is inversely proportional to the productivity of effort applied for reasoning. Our educational intervention improves the productivity of effort and decreases the marginal cost of applying analytical reasoning by providing simple decision rules to evaluate the credibility of sources as well as indicators of problematic content.
Moreover, by raising productivity, we also expect that the amount of effort required to correctly identify fake news should also decrease. This implies that the difference between $(a^H_F)^2$ and $(a^L_F)^2$ decreases, thereby increasing $p$. This is the short-run effect of our interventions. In the long run, educational interventions can also make people less biased and cause them to change their prior beliefs.

In order to understand how effort changes as a result of our intervention, we use response time as a proxy for how much effort is devoted to answering the questions related to the beliefs about news items. We expect that if the intervention decreased $\xi$  and $a^H_F$, then the response time for each of the news items should fall. Data on response time by news item are available from time audit sensor data collected by the survey application during the interaction with the respondents. We are able to match all of the respondents in the first endline and 714 of the 744 respondents in the second endline to time audit data files available from the survey application. To exclude outliers, we trim the top 1\% observations and exclude negative response time observations. For the remaining sample, we check whether the treatment had any significant effect on the time taken by an individual to respond to the questions asked about each news item.
\begin{table}[t]
\centering
\begin{threeparttable}
\footnotesize
\captionsetup{font=footnotesize}
\caption{Effect of Treatment on Response Times}
\label{table:effect_responsetimes}
\begin{tabular}{l p{1.5cm} p{1.5cm} p{1.5cm}}
\toprule
\multicolumn{4}{l}{Dependent variable: Response Time }\\
\hline
 & Fake & True &  All \\
 & (1) & (2) & (3) \\
 \toprule
\multicolumn{4}{l}{\textit{Panel A: All responses}}\\
Treatment 1&	0.035&	-0.031&	0.011\\
	&(0.062)&	(0.031)&	(0.043)	\\
Treatment 2	&	-0.092&	-0.033&-0.071*\\
&	(0.058)&	(0.028)	&(0.042)	\\
	Observations&	7,817&	4,301& 12,118\\
\hline
\multicolumn{4}{l}{\textit{Panel B: Correct responses}}\\
Treatment 1&	-0.005&	-0.032&	-0.016\\
&	(0.056)	&(0.031)	&(0.039)\\
Treatment 2	&	-0.081&	-0.006&	-0.050\\
		&	(0.052)&	(0.029)&(0.038)\\
	Observations&	5,620	&3,121&8,741\\
\hline

\multicolumn{4}{l}{\textit{Panel C: Incorrect responses}}\\
Treatment 1&	0.101&	-0.028& 0.061\\
&	(0.095)&	(0.064)&(0.070)\\
Treatment 2&	-0.093	&-0.103*&-0.128**\\
&	(0.089)	&(0.054)&(0.064)\\
Observations &	2,197&	1,180&3,377\\
\bottomrule
\end{tabular}
\begin{tablenotes}
    \footnotesize
    \item Notes:  Dependent variable is standardized to the the mean and standard deviation of control group. All regressions include baseline controls and endline dummy. Standard errors clustered by sample grid in parentheses. * p$<$0.1,  ** p$<$0.05,*** p$<$0.01.
\end{tablenotes}
\end{threeparttable}
\end{table}

In Table \ref{table:effect_responsetimes}, we investigate the effect of our treatment on response time for questions related to the news items shown at both the endline surveys. The mean response time for all news items in our sample is 93.7 seconds and the standard deviation is 83.6 seconds. In Panel A we report the effect for fake, true, and all news items separately. We find that treatment 2 reduced the response time by 0.073 standard deviations relative to the control group (statistically significant at the 10 percent level) and this effect is larger (although not significant a the 10 percent level) for fake news items. This suggests that treatment 2 works by increasing the productivity (or lowering the cost) of analytical reasoning by giving individuals simple decision rules to help them discern between fake news and true news. As the cost of analytical reasoning declines, individuals are more likely to have correct beliefs. For example, \citeN{rubin} argues that when presented with difficult or complex tasks, invoking simple decision rules can reduce response time as well as mistakes especially when such rules are correlated with making fewer mistakes, e.g., simple rules for spotting fake news.
However, when the notion of a mistake is clear cut (e.g., counting the letter F in a paragraph) then short response times can lead to greater mistakes.

Next, we separate the effect of treatment on response time separately for correct and incorrect responses. Panel B shows the effect of treatment on those who correctly identified a news item. Again we find suggestive evidence of a reduction in the time spent by the second treatment group (although not significant). It is important to note that individuals in the control group who are correctly identifying fake news are likely those with high digital literacy or whose prior beliefs are inconsistent with the fake news. Therefore, they are likely to spend less time answering and also more likely to correctly answer. Despite the composition heterogeneity, we find suggestive evidence that on average, individuals in the second treatment group are spending less time when they correctly identify fake news. 

In Panel C, we study the sample with incorrect responses. We find that treatment 2 had a negative effect on the time spent answering fake and true news related questions (significant for true news). Again there are likely to be composition differences in the type of individuals in the control and treatment groups who answer incorrectly. Individuals in the treatment group who answer incorrectly are likely to be those for whom the news conforms to their prior beliefs. Or this group may also comprise very low digital literacy individuals who experience smaller benefits of the treatment. In both cases, individuals in the second treatment group spend less time analyzing the news (relative to the control group) and form incorrect beliefs.

\section{Discussion}
\label{sec:discussion}
In this section, we discuss the differences between our two educational interventions, the scaling up and external validity of the second treatment.

\subsection{Features of Intervention}
Our results show that treatment 2 has a positive effect on improving individuals' ability to identify fake news. However, educating users about misinformation through general educational messages does not have a significant effect on improving their beliefs about the accuracy of fake news.
There are two features of treatment 2 that distinguish it from treatment 1:
(i) \emph{personalized feedback}, which depends on an individual's own performance in the baseline test, and
(ii) \emph{guidance} provided by the enumerators who point out different features of misinformation.
\begin{table}[t]
\centering
\begin{threeparttable}
\footnotesize
\captionsetup{font=footnotesize}
\caption{Effect of Treatment by Baseline Score }
\label{table:effect_by_baselinescore}
\begin{tabular}{l p{1.5cm} p{1.5cm}p{1.5cm}}
\toprule
\multicolumn{4}{l}{Dependent variable: Correctly identify news}\\
\hline
 & Fake & True &  All \\
& (1) & (2) & (3) \\
\toprule

Treatment 2	 & 0.189* & -0.055& 0.0103 \\
 & (0.112) & (0.078)& (0.083) \\
Treatment 2 x High baseline score & -0.048 & 0.006& -0.029\\
 & (0.112) & (0.087)& (0.086) \\ 

Observations & 8,111&4,428& 12,539 \\
\bottomrule
\end{tabular}
\begin{tablenotes}
    \footnotesize
      \item Notes: High baseline score is an indicator for above median baseline score. Sample is all news items shown in first two endlines. All regressions include standard controls. Treatment 1 and its interaction is not significant. Standard errors clustered by sample grid in parentheses. * p$<$0.1,  ** p$<$0.05,*** p$<$0.01.
\end{tablenotes}
\end{threeparttable}
\end{table}

Personalized feedback has been shown to be more effective in drawing attention than general informational messages \cite{personal1,personal7,personal5,personal6,personal3}. Moreover, corrective feedback can facilitate learning and boost long-term memory \cite{feedback_power} and its timing can play an important role in enhancing learning \cite{timing_matters}. In order to assess the importance of feedback, we estimate the effect of treatment 2 by baseline scores. Table \ref{table:effect_by_baselinescore} shows that as baseline score increases, the effect of treatment 2 decreases. This suggests that feedback was more valuable for those with lower baseline scores, while those with higher baseline scores experienced smaller gains.

Another feature of treatment 2 that can partly explain its effectiveness is the \emph{guidance} provided by enumerators while delivering feedback (e.g., pointing out different features of misinformation). Such guidance (or instruction) becomes more effective when the complexity of the presented material increases \cite{why_instruction,instruction_influence}.
As the median education level in our sample is Grade 12, some users may have found the video message complex or difficult to understand. In fact, our sub-group analysis shows a larger effect on more digitally literate users, which is consistent with the view of complementarity between prior knowledge and new learning.

\subsection{Scaling Up}
An important consideration for the delivery of educational interventions is the ease with which they can be delivered at scale. Treatment 2 can potentially be delivered at scale on social media platforms like Facebook and Twitter.
First, rather than showing fake and true news items to users, social media platforms can provide retrospective feedback to users based on their engagement with news items that \emph{later} turn out to be fake. Social media platforms regularly use fact-checking services on their platforms which can aid in this regard \cite{whatsapp-factorgs}. Second, a video can be shown to users on their news-feed in their own language. Finally, guidance can be provided by showing users the features found by fact-checking services (e.g., unverified source, altered images), which can be scalably converted into audio messages using text-to-speech and speech translation systems. 

\subsection{External Validity}
An important question about the external validity of our educational interventions is regarding the extent to which the effectiveness of treatment 2 can be generalized to a different setting. For example, individuals may learn how to identify fake news only when they are presented exactly the way we taught them (i.e., our interventions are ``teaching to the test"). To mitigate such concerns, we include news on a range of topics and in a variety of formats that are different than the three baseline fake news items shown to participants. We also conduct two endlines post-treatment and a phone survey fifteen months after the intervention. Our results suggest that the effect of treatment 2 was present across a range of news items presented in different formats and on behaviors likely to be affected by belief in COVID-19 related misinformation. The effect seems to have persisted even after a significant time had elapsed after the treatment was delivered.

However, we acknowledge that our randomized evaluation compares the difference between treatment and control group in a specific geographical area, population, and context, and our unit of observation is not large enough to pick up any general equilibrium effects. Therefore, we do not know the effectiveness of our interventions if implemented in a different context and at a larger scale \cite{meager}. Moreover, we do not observe real behaviors on social media, as our focus in this paper was low digital literacy users who are not well represented on platforms such as Twitter where it is easier to observe sharing and engagement with news. Therefore, going forward, it would be useful to replicate such a study at a larger scale, and with outcome measures that are reflective of online and offline behavior in various settings where people receive and process information (e.g., politics, health, economics, etc). 


Finally, in this work, we do not capture the welfare effects of our interventions such as implications of being more skeptical about news in general, impact on personal well-being and mental health, and the effect on social or political outcomes such as trust, coordination, and polarization. Measuring the welfare effects of such programs is an important area for future work.

\section{Conclusion}
\label{sec:concl}
As digital spaces become an important avenue for news consumption, it is important to understand how we can minimize the impact of misinformation on individuals and society. This is an even more challenging problem in contexts where individuals with limited digital literacy skills are joining online platforms in increasing numbers. Educational interventions focusing on increasing the ability of individuals to identify fake news offer a promising approach to combat misinformation, though evidence on their effectiveness is limited. 

In this paper, we designed and evaluated the effectiveness of two educational interventions to counter the spread of misinformation in a randomized control setting. 
Our findings show that educating individuals about how to recognize misinformation through a general video-based message does not have any significant effect. However, showing the educational video and providing personalized feedback about individuals' past engagement with misinformation, and guidance on how to recognize fake news, significantly increases the probability of identifying fake news by 0.14 standard deviations and both fake and true news by 0.076 standard deviations. 
We find that while there is no effect on identifying true news for males, females are likely to become skeptical even about true news. We conduct several robustness checks to show that experimenter demand effects are not driving our results.

Our findings have important policy implications for combating misinformation in the digital space among low digital literacy populations. We show that while educational programs hold promise, the design and delivery of such programs matter. While general educational messages delivered once do not produce any significant effect, personalized educational messages with feedback and guidance are successful in helping individuals learn how to identify misinformation. However, we do find evidence of lower take-up among females and less digitally literate individuals, which highlights the importance of designing programs that appeal to these segments and help them learn effectively.

Despite these contributions, we acknowledge that our experiment is carried out in a particular setting, and we do not observe actual online and offline behavior. 
We are also cognizant that there are other dimensions on which educational programs can be customized for effective learning, for example, perceived trustworthiness of who delivers the message, interactivity, and repetition. These remain important areas for future work. Finally, while educational efforts to improve digital literacy are certainly not a panacea to the misinformation problem, they can prove to be a vital way to reduce vulnerability to fake news and thereby improve the informational well-being of individuals and society at large.

\section*{Administrative Information}
\noindent
\emph{Ethics and IRB Review.}
Our study was approved by the IRB at Lahore University of Management Sciences with protocol number LUMS IRB/03252019).

\medskip
\noindent
\emph{Funding.}
We gratefully acknowledge support by the Facebook Integrity Foundational Research Award 2019 from Facebook Research in the form of an unrestricted gift and the Lahore University of Management Sciences (LUMS) Faculty Initiative Fund grant (FIF 597-2020-ECO).

\medskip
\noindent
\emph{Declaration of Interest.}
None.

\bibliographystyle{chicago}
\begin{footnotesize}
\singlespacing{
\bibliography{paper}}
\end{footnotesize}

\clearpage
\appendix
\section{Sampling, Survey Procedures and Randomization}
We implement our survey and randomized control trial (RCT) in the city of Lahore in Pakistan which has a population of approximately 11 million.
In order to draw our sample of low and middle income households, we rely on population density data. The pattern of urban development in Lahore is horizontal, and dwelling size is inversely correlated with income levels, therefore, we use population density as a proxy for income levels. Specifically, we use the \cite{asiapop} satellite data which provides population counts at a spatial resolution of 100m by 100m, to identify the low and middle income areas of Lahore from where we draw our sample. The selected the areas of Lahore account for 35\% of the city's total population and cover seven of the fourteen national assembly constituencies (see online Appendix Figure \ref{fig:samp1}). These areas cover the older parts of the city where the median population density is 109 persons per 100m by 100m grid.\footnote{Urban development in Lahore has led to new housing schemes being developed outside the older parts of the city, and the richer and affluent households have moved into these new housing schemes.}  As a comparison, the median population density in the areas of Lahore not covered by our sample is 28 persons per 100m by 100m grid. 

We draw a random sample of 200 grids from the selected areas and survey five households per grid, covering a total of 1000 households in the baseline survey. In order to initiate the data collection within the chosen grids, we randomly dropped a point (x and y coordinate) within the grid. The enumerators arrived at the point and used the left hand rule to survey, within each grid, five households  where at least one social media user is present. The definition of social media user for our study is that the respondent must be at least 18 years of age and use either Facebook or WhatsApp. 

The RCT sample consists of 150 grids drawn randomly from the baseline grid sample. We randomly assigned 50 grids each to control, treatment 1, and treatment 2 status. In order to ensure gender balance, we also pre-specified the number of male and female participants to be drawn from each grid for the RCT (either 2 or 3 out of a total of 5 individuals per grid point). At the start of the baseline survey, the enumerators were informed (by the survey application), whether a male or female respondent was to be selected for the experiment. If the required gender was not present, the household was replaced. This gives us a sample of 250 social media users per treatment arm, and 750 social media users altogether drawn from 750 unique households, who become part of our experiment and are visited again at endline. Online Appendix Figure \ref{fig:samp2} shows the distribution of the grid points coded by treatment status and Figure \ref{fig:randomization} shows the randomization scheme.

\subsection{Baseline Survey}
The baseline survey is answered by all social media users present in the household at the time of the survey. The survey consists of the following sections:
\begin{enumerate}
    \item Household and demographic characteristics including household size, monthly expenditures, number of social media users, age, gender, and highest education level. Age is measured as a continuous variable, while education is categorized by highest level completed (0=no education, 1=primary, 2= middle, 3= secondary/grade 8, 4=matric/grade 10, 5=intermediate/grade 12, 6=bachelors, 7=masters, 8=above masters).
    \item Social media use including phone ownership and sharing, social media applications used, number of hours spent per day using Facebook and WhatsApp, size and composition of the social media network, top three news sources, and sharing of news on social media. We also ask if they have heard of the term ``fake news" and if they ever believed a news that later turned out to be false.
    \item Digital literacy measured by self-reported ability to turn on mobile data, WiFi, use Google search, use social media without assistance, read text on social media, and read English and Urdu on social media. We also ask about knowledge and use of common features of WhatsApp and Facebook. 
\end{enumerate}

\subsection{Questions on News items}
After the end of the baseline survey, we show each social media user a series of news items. We compiled a list of seven recent popular true and fake news stories, covering current events and general interest topics. Out of these three are true news stories and three are fake news stories. We also created one placebo news item which is used to measure and control for false recall. The news stories are shown in the form of screenshots of messages, posts, or tweets, similar to how users would typically receive news on social media. Some of the news stories are in English only, while others are completely or partially in Urdu. The respondents are first asked to carefully view a printed version of the screenshot and then asked to listen to an audio recording of the news in Urdu. After having viewed and heard the news, we ask the following questions:
\begin{itemize}
\item Have you seen this news before? (Yes/No)
\item If yes then did you think this news was true the first time you saw it? (Yes/No)
\item Right now do you think that this statement is true? (Yes/No)
\item When you saw this news how did you feel? (Very positive, positive, neutral, negative, very Negative).
\item When you saw this news what did you want to do? (Nothing, ask friends and family about it, search online, other)
\item Why do you consider the news to be true or fake? (unprompted and select all relevant options from: aligns with my political beliefs, aligns with my religious beliefs, it has the right source (link/report/video/article), it uses professional language, the picture is real, it is not biased, other)
\item Will you share the news on social media? (Yes/No)
\end{itemize}

\subsection{COVID-19 Follow-up Survey}
The follow-up survey was carried out in September 2020 over the phone. Participants were paid Rs. 100 as mobile credit for completing the survey. The survey includes questions about sources of information about news related to COVID,  beliefs about a set of popular COVID related news statements, and questions on preventative behaviors. In order to measure their beliefs about COVID related news, we create a set of ten news statements (seven fake and three true). The statements are read out by the enumerator on the phone and the participant answers two questions about each statement, 
\begin{itemize}
\item Have you seen or heard this statement before? (Yes/No)
\item To what an extent do you agree with the statement? (Agree/Disagree/Not Sure)
\end{itemize}

Our dependent variable, which is belief about the accuracy of the news statement, takes the value 1 if the individual has correct beliefs about the news statement (that is agrees with true news and disagrees with fake news), 0 if they have incorrect beliefs, and 0.5 if they answer not sure. 

\newpage
\section{Proofs of Propositions}

\noindent
\textbf{Proof of Proposition 1}:\\
\begin{proof}
\, To show that $p$ increases with $A=(a^H_F)^2 - (a^L_F)^2$, we take the partial derivative of $p$ with respect to $a$, which gives the following expression:
\begin{equation}
    \frac{\partial p}{\partial A} = -\frac{\phi\left(D\right) \cdot \xi}{4v(\delta + \chi)D}
\end{equation}
where $D=\frac{ b + \sqrt{\frac{\xi A}{\delta + \chi} - 1} } {\sqrt{2v}}$, and $\phi(.)$ is the probability density function of the standard normal random variable. $\frac{\partial p}{\partial A} < 0$ because $\phi(.) > 0$ and we assume $D > 0$. Thus $p$ is decreasing in $A$.
\end{proof}

\medskip
\medskip
\noindent
\textbf{Proof of Propositions 2 and 3}:\\
\begin{proof}
\, To show that $p$ increases with $\chi$, we take the partial derivative of $p$ with respect to $\chi$, which gives the following expression\footnote{We obtain the same partial derivative with respect to $\delta$, the benefit of knowing the truth.}:
\begin{equation}
    \frac{\partial p}{\partial \chi} = \frac{\phi\left(D\right)}{4v(\delta + \chi)^2D}
\end{equation}
where $D=\frac{ b + \sqrt{\frac{\xi A}{\delta + \chi} - 1} } {\sqrt{2v}}$ and $\phi(.)$ is the probability density function of the standard normal random variable. $\frac{\partial p}{\partial \chi} > 0$ because $\phi(.) > 0$ and we assume $D > 0$. Thus $p$ is increasing in $\chi$.
\end{proof}

\medskip
\medskip
\noindent
\textbf{Proof of Proposition 4}:\\
\begin{proof}
\, To show that $p$ decreases with the expected bias $b$, we take the partial derivative of $p$ with respect to $b$, which gives the following expression:
\begin{equation}
    \frac{\partial p}{\partial b} = -\frac{\phi\left(D\right)}{\sqrt{2v}}
\end{equation}
where $D=\frac{ b + \sqrt{\frac{\xi A}{\delta + \chi} - 1} } {\sqrt{2v}}$, and $\phi(.)$ is the probability density function of the standard normal random variable. $\frac{\partial p}{\partial b} < 0$ because $\phi(.) > 0$ and we assume $D > 0$. Thus $p$ is decreasing in $b$.
\end{proof}

\medskip
\medskip
\noindent
\textbf{Proof of Proposition 5}:\\
\begin{proof}
\, To show that $p$ increases by decreasing $\xi$, we take the partial derivative of $p$ with respect to $\xi$, which gives the following expression:
\begin{equation}
    \frac{\partial p}{\partial \xi} = \frac{\phi\left(D\right)A}{4v(\delta + \chi)D}
\end{equation}
$\frac{\partial p}{\partial \xi} < 0$ because $\phi(.) > 0$ and $D=\frac{ b + \sqrt{\frac{\xi A}{\delta + \chi} - 1} } {\sqrt{2v}} > 0$. Thus $p$ increases when $\xi$ decreases.
\end{proof}

\clearpage
\section{Appendix Figures and Tables}

\setcounter{figure}{0}
\renewcommand{\thefigure}{A.\arabic{figure}}

\begin{figure}[ht]
    \centering
    \captionsetup{font=footnotesize}
    \includegraphics[width=1.0\textwidth]{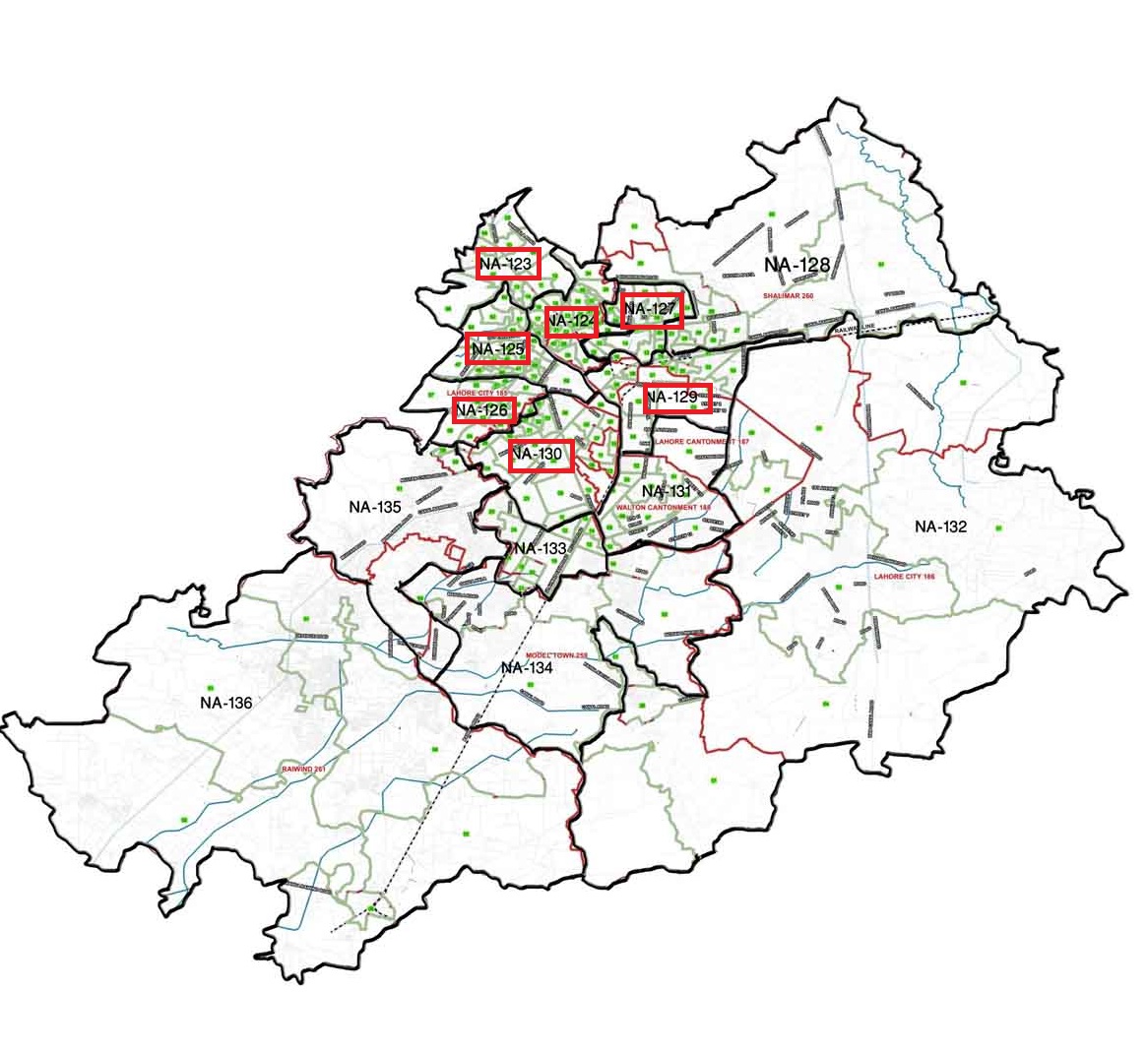}
    \caption{Map of Sample Areas in Lahore}
    \caption*{\footnotesize{The map obtained from the Pakistan Bureau of Statistics (PBS) shows the National Assembly constituencies in the city of Lahore. The constituencies from which the sample is drawn are highlighted in red.}}
    \label{fig:samp1}
\end{figure}

\begin{figure}[ht]
    \centering
    \captionsetup{font=footnotesize}
    \includegraphics[width=1.0\textwidth]{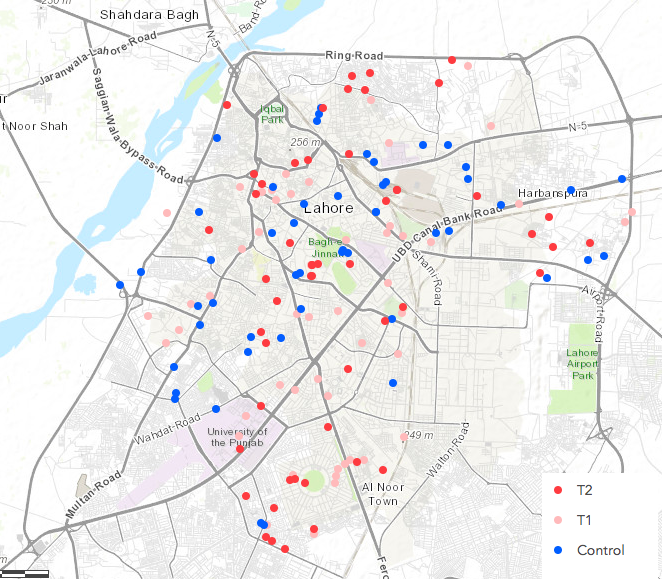}
    \caption{Distribution of Sample Grid Points}
    \caption*{\footnotesize{The map shows the grid points (geographic clusters) included in the RCT sample, coded by treatment status.}}
    \label{fig:samp2}
\end{figure}

\begin{figure}[ht]
    \centering
    \captionsetup{font=footnotesize}
    \includegraphics[width=1.0\textwidth]{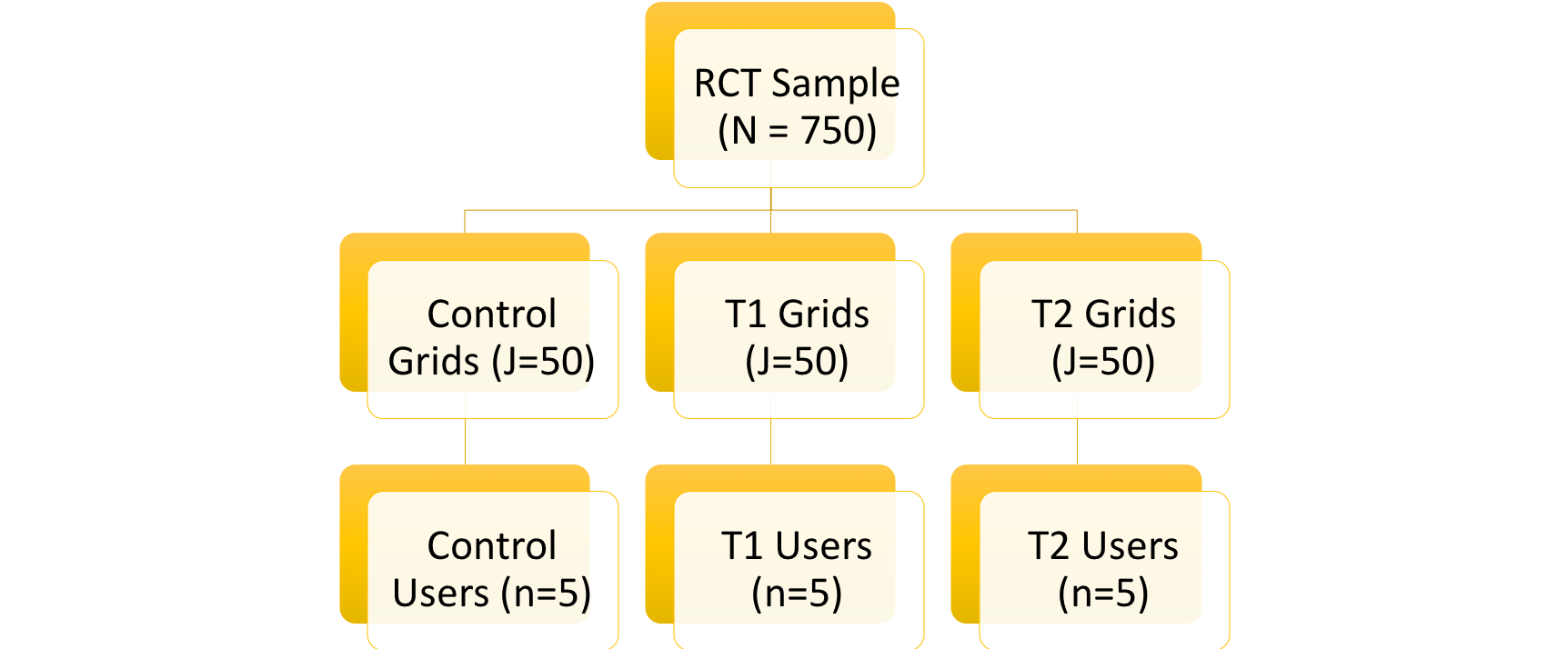}
    \caption{Randomization Scheme}
    \label{fig:randomization}
\end{figure}

\newpage
\newpage

\clearpage
\setcounter{table}{0}
\renewcommand{\thetable}{A.\arabic{table}}

\begin{table}
\centering
\begin{threeparttable}
\footnotesize
\captionsetup{font=footnotesize}
\caption{Survey Response Rates}
\label{table:survey_response_rates}
\begin{tabular}{l*{6}c}
\toprule
& \multicolumn{3}{c}{Means and Standard Deviation} & \multicolumn{3}{c}{Difference in Means and p-values}\\
& Control & T1 & T2 & T1 - Control & T2 - Control& T1 - T2\\
\toprule
Endline 1 &0.968&0.968&0.984&0.000&0.016&-0.016\\
&(0.176)&(0.176)&(0.126)&(1.000)&(0.231)&(0.227)\\
Endline 2 &0.980&0.996&1.000&0.016*&0.02**0&-0.004\\
&(0.140)&(0.063)&(0.000)&(0.091)&(0.021)&(0.316)\\
COVID-19 followup&0.825&0.829&0.827&0.004&0.003&0.001\\
&(0.381)&(0.378)&(0.379)&(0.907)&(0.946)&(0.969)\\
Shared mobile number at baseline&0.944&0.936&0.924&-0.008&-0.020&0.012\\
&(0.231)&(0.245)&(0.266)&(0.729)&(0.420)&(0.614)\\
Observations & 251 & 251 & 249 & 502 & 500 & 500 \\
\bottomrule
    \end{tabular}

    \label{tab:attrition}
\begin{tablenotes}
      \footnotesize
      \item Notes: Mean and standard deviation (in parentheses) of the response rates by endline and treatment status is shown in columns 1-3.  Difference in means and p-value of the difference (in parentheses) are shown in columns 4-6. Standard errors are clustered by sample grid. * p$<$0.1, ** p$<$0.05, *** p$<$0.01.
\end{tablenotes}
\end{threeparttable}
\end{table}

\begin{ThreePartTable}
\begin{TableNotes}
\footnotesize
\item Notes: Columns 1 to 3 show the mean and standard deviation of the variable by control, treatment 1, and treatment 2. Columns 4 to 6 report the p-value of the difference in means of the three groups compared pairwise. Education levels are measured on a categorical scale from 1-9 corresponding to no education, primary, middle, matric (grade 10), intermediate (grade 2), bachelors, masters, above masters, diploma. WhatsApp (Facebook) features aware and features used is the the fraction of features that the respondent is aware of and fraction of features used. Baseline false recall is equal to 1 if the respondent says yes to recalling placebo (constructed) news item.
\end{TableNotes}
\footnotesize

\begin{longtable}[c]{lp{1.2cm}p{1.2cm}p{1.2cm}p{1.2cm}p{1.2cm}p{1.2cm}}
\captionsetup{font=footnotesize}
 \caption[c]{Balance Checks\label{balance}}\\
 \toprule
& \multicolumn{3}{c}{Means and Standard Deviation}& \multicolumn{3}{c}{Difference in Means p-value}\\
 \hline
& Control & T1 & T2 & T1 vs. Control & T2 vs. Control& T1 vs. T2\\
 \hline
 \endfirsthead

 \multicolumn{7}{l}{\textit{Continuation of Table \ref{balance}}}\\
 \hline
& \multicolumn{3}{c}{Means and Standard Deviation}& \multicolumn{3}{c}{Difference in Means p-values}\\
 \hline
& Control & T1 & T2 & T1 vs. Control & T2 vs. Control& T1 vs. T2\\

 \hline
 \endhead
 
 \hline
 \endfoot
 
 \hline
 \insertTableNotes
 \endlastfoot
 
 \multicolumn{7}{l}{Panel A: Household Characteristics}\\
 \hline
Monthly expenditure & 36.450&33.346&34.949&0.345&0.667&0.630\\
(Rs. '000s)&(20.077)&(26.997)&(27.312)&&&\\

Fraction of social media users & 0.448&0.452&0.454&0.871&0.789&0.926\\
&(0.225)&(0.210)&(0.222)&&&\\

 \hline 
 \multicolumn{7}{l}{Panel B: Individual Characteristics}\\
 \hline
 Female &0.522&0.498&0.506&0.347&0.538&0.714\\
&(0.501)&(0.501)&(0.501)&&&\\

 Age & 30.143&28.992&29.394&0.249&0.456&0.631\\
&(10.752)&(9.607)&(9.508)&&&\\

 Education level & 4.817&4.865&4.739&0.746&0.613&0.386\\
&(1.382)&(1.286)&(1.289)&&&\\

 
 Own smart phone & 0.976&0.920&0.924&0.012&0.014&0.897\\
&(0.153)&(0.271)&(0.266)&&&\\

Have WhatsApp account &0.968&0.952&0.964&0.374&0.803&0.477\\
&(0.176)&(0.214)&(0.187)&&&\\

Have Facebook account & 0.470&0.458&0.470&0.804&0.996&0.812\\
&(0.500)&(0.499)&(0.500)&&&\\

Have Twitter Account & 0.016&0.016&0.004&1.000&0.172&0.172\\
&(0.125)&(0.125)&(0.063)&&&\\ \\

Hours on social media&	3.104&3.137&3.158&0.916&0.862&0.941\\
&(2.680)&(2.741)&(2.373)&&&\\ 
Social media as primary	&0.092&0.096&0.100&0.899&0.780&0.886\\
news source&(0.289)&(0.295)&(0.301)&&&\\	




View content on social media&	1.000&0.992&0.996&0.154&0.318&0.565\\
&(0.000)&(0.089)&(0.063)&&&\\

Share content on social media&	0.669&0.677&0.715&0.897&0.406&0.517\\
&(0.471)&(0.468)&(0.452)&&&\\

Create content on social media&	0.410&0.367&0.438&0.511&0.703&0.288\\
&(0.493)&(0.483)&(0.497)&&&\\

Connect to WiFi and/or&	0.825&0.841&0.815&0.705&0.819&0.543\\
mobile data&(0.381)&(0.367)&(0.389)&&&\\

Use Google search&0.801&0.805&0.743&0.931&0.206&0.177\\
&(0.400)&(0.397)&(0.438)&&&\\

Use social media without&0.861&0.869&0.819&0.840&0.278&0.212\\
assistance&(0.347)&(0.339)&(0.386)&&&\\
Read English on social media &	0.649&0.598&0.643&0.281&0.889&0.346\\
&(0.478)&(0.491)&(0.480)&&&\\

WhatsApp features aware &0.911&0.870&0.861&0.059&0.045&0.756\\
&(0.142)&(0.185)&(0.209)&&&\\
WhatsApp features used &0.855&0.838&0.820&0.443&0.166&0.509\\
&(0.194)&(0.204)&(0.241)&&&\\
Facebook features aware&0.802&0.756&0.741&0.193&0.111&0.714\\
&(0.215)&(0.222)&(0.239)&&&\\
Facebook features used&0.764&0.713&0.698&0.104&0.071&0.698\\
&(0.208)&(0.229)&(0.261)&&&\\

Ever believed fake news &0.410&0.474&0.447&0.293&0.533&0.687 \\
&(0.492)&(0.499)&(0.497)&&&\\

Baseline news score &0.637&0.639&0.655&0.947&0.405&0.443\\
&(0.158)&(0.170)&(0.193)&&&\\

Baseline false recall &0.072&0.100&0.124&0.273&0.071&0.429\\
&(0.259)&(0.300)&(0.331)&&&\\

\bottomrule
\end{longtable}
\end{ThreePartTable}

\begin{table}[t]
\centering
\begin{threeparttable}
\footnotesize
\captionsetup{font=footnotesize}
\caption{Treatment Effect on Placebo News }
\label{table:effect_on_placebo}
\begin{tabular}{l p{1.5cm} p{1.5cm}p{1.5cm}p{1.5cm}p{1.5cm}}
\toprule

 \multicolumn{6}{l}{Panel A - Dependent variable: Correctly identify news} \\
 \hline
Treatment 1 &-0.034 &&&&\\
& (0.048) &&&&\\
Treatment 2 & -0.040&&&&\\
& (0.043) &&&&\\
\hline
 Panel B - Dependent variable:& Emotional&Do nothing&Discuss with friends/family & Search online & Share \\
 \hline
 Treatment 1 &-0.002 & -0.048 & 0.001 & 0.094 & 0.013\\
 & (0.081) & (0.104) & (0.086) & (0.091) & (0.036)\\
 Treatment 2 & -0.050 & 0.091 & -0.139* & 0.009 & -0.014\\
 & (0.084) & (0.089) & (0.075) & (0.082) & (0.034) \\
 
 \hline
 Panel C - Dependent variable:& Prior Belief&Source&Quality & Bias&\\
\hline
Treatment 1&-0.050 & -0.050 & 0.121 & -0.030 &\\
&	(0.081) & (0.084) & (0.106) & (0.068)&\\

Treatment 2&-0.085 & 0.023 & 0.124 & -0.046 & \\
&(0.080) & (0.084) & (0.104) & (0.061) &\\
 
\bottomrule
\end{tabular}
\begin{tablenotes}
    \footnotesize
      \item Notes: Sample is only placebo news items shown in first endline. All regressions include baseline controls. Standard errors clustered by sample grid in parentheses.* p$<$0.1,  ** p$<$0.05,*** p$<$0.01.
\end{tablenotes}
\end{threeparttable}
\end{table}

\begin{table}[t]
\centering
\begin{threeparttable}
\footnotesize
\captionsetup{font=footnotesize}
\caption{Treatment Effect by Digital Literacy Variables}
\label{table_whatsapp_fb}
\begin{tabular}{lp{1.5cm} p{1.5cm}p{1.5cm}p{1.5cm}p{1.5cm}p{1.5cm}}
\toprule
\multicolumn{4}{l}{Dependent Variable: Correctly identify news}\\

&Basic Score&WhatsApp Score&Facebook Score\\
\toprule

\multicolumn{4}{l}{\textit{Panel A: Fake News}}\\
\\
Treatment 1&-0.026&	-0.021&	0.024\\
&(0.061)&	(0.062)	&(0.078)\\
Treatment 1 x Score & 0.103***& 0.094***&0.020\\
& (0.038) & (0.033) & (0.056)	\\
Treatment 1 High digital literacy & 0.0768&	0.0733	&0.0437\\
& (0.079) &	(0.080) &	(0.105)\\
Treatment 2 &	0.146**	&0.142**&	0.231***\\
&(0.057)	&(0.058)&	(0.076)\\
Treatment 2 x Score & 0.079**&  0.065**& -0.024\\
& (0.033) & (0.033)&(0.050)	\\
Treatment 2 High digital literacy & 0.226*** &	0.206*** &	0.207**\\
& (0.070) &	(0.076) & 	(0.101)\\
Observations & 8,111& 	7,798 &	3,765 \\
\hline
\multicolumn{4}{l}{\textit{Panel B - True News}}\\
\\

Treatment 1&	-0.007	&-0.005	&-0.020\\
&(0.037)&	(0.038)	&(0.046)\\
Treatment 1 x Score & -0.015& 0.022 & 0.052\\
& (0.034) &(0.031) & (0.049)\\
Treatment 1 High digital literacy & -0.0219	&0.0168 &	0.0316 \\
& (0.050) &	(0.050) &	(0.068)\\

Treatment 2	&-0.034&	-0.044&	0.003\\
&(0.037)	&(0.038)&	(0.049)\\
Treatment 2 x Score & 0.004 & 0.004 & 0.043\\
& (0.036) & (0.033)&(0.043)\\

Treatment 2 High digital literacy & -0.030& 	-0.039 &	0.047 \\
& (0.053) & 	(0.052) &	(0.071)\\
Observations & 4,428& 	4,257 &	2,055 \\

\bottomrule
\end{tabular}
\begin{tablenotes}
      \footnotesize
      \item Notes: Sample is all news items shown in the two endlines. Standard errors clustered by sample grid in parentheses. See text for details about construction of Basic, Whatsapp and Facebook scores. Treatment effect on high digital literacy group is the total effect of treatments one standard deviation above the mean score. Out of a total of 750 individuals, the number of individuals who use WhatsApp and Facebook are 720 and 348 respectively.* p$<$0.1, ** p$<$0.05, *** p$<$0.01.
\end{tablenotes}
\end{threeparttable}
\end{table}

\begin{table}
\centering
\begin{threeparttable}
\footnotesize
\captionsetup{font=footnotesize}
\caption{Treatment Effect by Demographic Variables}
\label{table:hetero_by_demo}
\begin{tabular}{l lll}
\toprule
\multicolumn{3}{l}{Dependent variable: Correctly identify news}\\

&Gender&Age & Education\\
\toprule

\multicolumn{4}{l}{\textit{Panel A: Fake News}}\\
\\
Treatment 1&	0.032&	-0.004&	-0.060\\
&	(0.077)	&(0.073)&	(0.059)	\\
Treatment 1 x Moderator & -0.099 & -0.027 & 0.146*\\
& (0.078)& (0.068)& (0.081) \\
Treatment 1 effect on subgroup & -0.066 & -0.032 & 0.086 \\
& (0.069) & (0.068) & (0.093) \\
Treatment 2	&0.184**&	0.165**	&0.113*\\
	&(0.073)&	(0.068)&	(0.060)	\\
Treatment 2 x Moderator & -0.074 & -0.036& 0.100\\
& (0.078)& (0.068)& (0.080) \\
Treatment 2 effect on subgroup & 0.110* & 0.129 & 0.213**\\
& (0.067) & (0.067) & (0.085)\\
	
\hline
\multicolumn{4}{l}{\textit{Panel B: True News}}\\
\\
Treatment 1&	0.018&	-0.003&	0.002\\
&	(0.077)	&(0.059)&	(0.046)	\\
Treatment 1 x Moderator &-0.071 &-0.004 & -0.029\\
& (0.071) & (0.074)&(0.067)\\
Treatment 1 effect on subgroup  & -0.054&-0.037 &-0.027 \\
& (0.056) & (0.054) & (0.062)\\
Treatment 2	&0.030&	-0.049 &-0.033\\
	&(0.054)&	(0.056)&	(0.042)	\\
Treatment 2 x Moderator & -0.159* & -0.004& -0.046\\
& (0.081) & (0.074)& (0.075)\\
Treatment 2 effect on subgroup  & -0.129** & -0.053&-0.079\\
& (0.058)& (0.052)&(0.069)\\
	\bottomrule
\end{tabular}
\begin{tablenotes}
      \footnotesize
      \item Notes: Sample is all news items shown in the two endlines. Standard errors clustered by sample grid in parentheses. Moderators are indicator for female, above median age (27 years), and above median education (Intermediate/Grade 12). The total treatment effect for subgroups female, above median age and above median education is reported below the interaction effects. * p$<$0.1, ** p$<$0.05, *** p$<$0.01.
\end{tablenotes}
\end{threeparttable}
\end{table}

\begin{table}
\centering
\begin{threeparttable}
\footnotesize
\captionsetup{font=footnotesize}
\caption{Treatment Effects Heterogeneity for Females by Digital Literacy}
\label{table:hetero_for_females_by_dl}
\begin{tabular}{l p{1.5cm} p{1.5cm} p{1.5cm}}
\toprule
\multicolumn{4}{l}{Dependent variable: Correctly identify news} \\

& Fake&True &All\\
& (1) & (2) & (3) \\
\toprule
Treatment 1	&	-0.079 & -0.073&-0.077\\
&	(0.069)	 &(0.055) & (0.054)\\
Treatment 1x digital literacy score	&	0.121*** & -0.017&0.073*	\\
&(0.045)& (0.052) & (0.030)	\\
Treatment 1 on high digital literacy&0.043	&-0.091 & -0.004\\
&(0.088)&	(0.070)& (0.067) \\
Treatment 2	&	0.118*	&	 -0.136**& 0.028\\
		&	(0.066)	&	(0.059) & (0.053) \\
Treatment 2 x digital literacy score &	0.080*	&	 -0.057& 0.031\\
&	(0.043)	&	(0.054) & (0.037) \\
Treatment 2  on high digital literacy 	&0.198**& -0.194**&0.056	\\
&(0.086)& (0.082) &(0.071)	\\
Observations&4,167& 2,274&6,441\\
\bottomrule
\end{tabular}
\begin{tablenotes}
     \footnotesize
      \item Notes: Sample is only females and news items shown in the two endlines. All regressions include standard controls. Digital literacy score is standardized by the mean and standard deviation of the control group. The total effect of each treatment on the subgroup is reported. Standard errors clustered by sample grid in parentheses. 
\end{tablenotes}
\end{threeparttable}
\end{table}

\begin{table}[t]
\centering
\begin{threeparttable}
\footnotesize
\captionsetup{font=footnotesize}
\caption{Treatment Effect by Gender and Social Desirability Trait }
\label{table:effect_by_social_gender}
\begin{tabular}{l p{1.5cm} p{1.5cm}p{1.5cm}}
\toprule
\multicolumn{4}{l}{Dependent variable: Correctly identify news}\\
 & Fake & True &  All \\
\toprule
\multicolumn{4}{l}{\textit{Panel A - Females}}\\

Treatment 1 & -0.028 & -0.061 & -0.040	\\
 & (0.084) & (0.075) & (0.067) \\
 Treatment 1 x Social Desirability & -0.079 & -0.015 & -0.056\\
& (0.113) & (0.099) & (0.088)\\

Treatment 2	 & 0.138 & -0.124 & 0.046  \\
& (0.093) & (0.084) & (0.076)\\
Treatment 2 x Social Desirability  &  -0.012 & 0.009 & -0.004\\
& (0.121) & (0.112) & (0.096)  \\ 

\hline
\multicolumn{4}{l}{\textit{Panel B - Males}}\\

Treatment 1 & -0.011 & -0.008 & -0.010	\\
 & (0.094) & (0.065) & (0.074) \\
 Treatment 1 x Social Desirability & 0.071 & 0.044 & 0.043\\
& (0.130) & (0.105) & (0.099)\\

Treatment 2	 & 0.153* & 0.008 & 0.102  \\
& (0.089) & (0.070) & (0.070)\\
Treatment 2 x Social Desirability  &  0.070 & 0.044 & 0.061\\
& (0.137) & (0.105) & (0.104)  \\ 

\bottomrule
\end{tabular}
\begin{tablenotes}
    \footnotesize
      \item Notes: All regressions include baseline controls and endline dummy. Standard errors clustered by sample grid in parentheses.* p$<$0.1,  ** p$<$0.05,*** p$<$0.01.
\end{tablenotes}
\end{threeparttable}
\end{table}

\begin{table}
\centering
\begin{threeparttable}
\footnotesize
\captionsetup{font=footnotesize}
\caption{Effect of Treatment on News Recall}
\label{table:te_recall}
\begin{tabular}{l llll}
\toprule
\multicolumn{5}{l}{Dependent Variable: Recall seeing news before} \\
\hline
& Covid (Fake)& Fake& True& All \\
\cline{2-5}

Treatment 1	&-0.005&	-0.004	&	-0.025&-0.011\\
		&(0.013)&	(0.012))	&	(0.027)&	(0.013) \\
Treatment 2	&0.009&	-0.006	&	-0.047*&-0.02 \\
		&(0.013)&	(0.011)	&	(0.026)&	(0.013) \\

Observations & 4,347& 8,111& 4,428& 12,539 \\
\bottomrule
\end{tabular}
\begin{tablenotes}
     \item Notes: Dependent variable is equal to 1 if the individual recalls seeing or hearing the news from before. Column 1 includes all COVID-19 fake news included in the phone survey. COVID-19 true news is omitted as recall is 100\%.. In columns 2-4, the sample is fake,true, and all news items shown post-treatment in the two endline survey. * p$<$0.1, ** p$<$0.05, *** p$<$0.01.
\end{tablenotes}
\end{threeparttable}
\end{table}

\clearpage
\section{Indices}
\label{sec:app_indices}
\begin{table}[ht]
\footnotesize
    \centering
     \caption{Survey questions corresponding to the information behavior index and the preventative behavior index.}
    \label{tab:indices}
    \begin{tabular}{p{5cm}|p{11cm}}
    \hline
      Information behavior index   &  (1) How frequently are you getting information from each of the following sources about the coronavirus through any medium (including reading online, watching on TV, etc.)? [\emph{For each source: The World Health Organization (WHO) and Government Sources (e.g., covid.gov.pk, press briefings, etc); Responses: Often, Sometimes, Rarely, Never}], coded as 0,0, 1/3,2/3 and 1 respectively. \\
      & (2) How much trust and confidence do you have in each of the following sources when it comes to reporting about the coronavirus? [\emph{For each platform: The World Health Organization (WHO) and Government Sources (e.g., covid.gov.pk, press briefings, etc); Responses: A lot, To some extent, Less, Very less}], coded as 1,3/4,2/4 and 1/4 respectively.\\
      & (3) How frequently do you use the following social media platforms for receiving/sharing news about the coronavirus? [\emph{For each platform: WhatsApp, Facebook, Twitter, YouTube; Responses: Often, Sometimes, Rarely, Never}], coded as 1,2/3,1/3, and 0 respectively. \\
      & (4) How much trust and confidence do you have in news about the coronavirus received through the following social media platforms? [\emph{For each platform: WhatsApp, Facebook, Twitter, YouTube; Responses: A lot, To some extent, Less, Very less}], coded as 0,1/4,2/4, and 3/4 respectively. \\
      & (5) In the last week, on average how often did you share information about coronavirus with others on social media? [\emph{Responses: Never, Once, 2-3 times, 4-5 times, More than five times}], coded as 1, 1/2, 1/3.5, 1/5.5 and 1/7.5 respectively. \\ \hline
      
      Preventative behavior index & (1) On average how many people do you interact with outside your house (in close proximity within 2 arms length), everyday? , coded as 1/(1+x) where x is the number of people.\\
      & (2) On a scale of 1-5, to what extent do you agree or disagree with the following statement: `I will send my children back to school tomorrow if they re-open' [\emph{Responses: 1 (Strongly Agree) - 5 (Strongly Disagree)}], coded as 0,0.24, 0.5, 0.75,  and 1 respectively. \\
      & (3) What type of mask do you have? (can select multiple options) [\emph{Options: No mask, Cloth mask, Non-surgical mask (e.g., blue/green colored ones), KN95 mask (or N95 mask)}], coded as 0 if no mask, 0.5 if own 1 type of mask and 1 if own multiple types of masks.\\
      & (4) In your opinion, what is the market price of each of the following types of masks? [\emph{Cloth mask, Non-surgical mask (e.g., blue/green colored ones), KN95 mask (or N95 mask)}] \\ \hline
    \end{tabular}
   
\end{table}

\end{document}